\newcommand{\sAtt}{{\mathsf{ Att}}}
\newcommand{\sRep}{{\mathsf{ Rep}}}
\newcommand{\Z}{{\mathbb{Z}}}
\newcommand{\R}{{\mathbb{R}}}
\newcommand{\Rp}{{\mathbb{R}^+}}
\newcommand{\Rpm}{{\mathbb{R}_+^m}}
\newcommand{\Rpn}{{\mathbb{R}_+^n}}
\newcommand{\Inv}{\text{Inv}}
\newcommand{\Int}{\text{int}}
\newcommand{\Con}{\text{Con}}
\newcommand{\img}{\text{im }}
\def\<{\langle}
\def\>{\rangle}
\begin{document}

\title{Nash, Conley, and Computation:\\ Impossibility and Incompleteness in Game Dynamics}

\author{
	    \textbf{Jason Milionis} \\
		\small Columbia University\\
		\small \tt{jm@cs.columbia.edu}
		\and
		\textbf{Christos Papadimitriou} \\
		\small Columbia University\\
		\small \tt{christos@columbia.edu}\\
		\and
		\textbf{Georgios Piliouras} \\
		\small SUTD\\
		\small \tt{georgios@sutd.edu.sg}
		\and
		\textbf{Kelly Spendlove}\\
		\small University of Oxford\\
		\small \tt{spendlove@maths.ox.ac.uk}
}
\date{}
\maketitle

\thispagestyle{empty}

\begin{abstract}
Under what conditions do the behaviors of players, who play a game repeatedly, converge to a Nash equilibrium? If one assumes that the players' behavior is a discrete-time or continuous-time rule whereby the current mixed strategy profile is mapped to the next, this becomes a problem in the theory of dynamical systems.
We apply this theory, and in particular the concepts of chain recurrence, attractors, and Conley index, to prove a general impossibility result: there exist games for which any dynamics is bound to have starting points that do not end up at a Nash equilibrium.
We also prove a stronger result for $\epsilon$-approximate Nash equilibria: there are games such that no game dynamics can converge (in an appropriate sense) to $\epsilon$-Nash equilibria, and in fact the set of such games has positive measure. Further numerical results demonstrate that this holds for any $\epsilon$ between zero and $0.09$.
Our results establish that, although the notions of Nash equilibria (and its computation-inspired approximations) are universally applicable in all games, they are also \textit{fundamentally incomplete as predictors} of long term behavior, \textit{regardless of the choice of  dynamics}.
\end{abstract}

\section{Introduction}
The Nash equilibrium, defined and shown universal by John F.~Nash in 1950 \citep{Nash1950}, is paramount in game theory, routinely considered as the default solution concept --- the ``meaning of the game.'' 
Over the years --- and especially in the past two decades during which game theory 
has come under intense computational scrutiny --- the Nash equilibrium has been noted to
suffer from a number of disadvantages of a computational nature.
There are no efficient algorithms for computing the Nash equilibrium of a game, 
and in fact the problem has been shown to be intractable \citep{DGP,CDT,EY}.
Also, there are typically many Nash equilibria in a game, and the selection problem
leads to conceptual complications and further intractability, see e.g.~\citep{GP}.
A common defense of the Nash equilibrium is the informal argument that ``the players will eventually get there.''
However, no learning behavior has been shown to converge to the Nash equilibrium, and
many {\em game dynamics} of various sorts proposed in the past have typically been shown to fail to reach a Nash equilibrium for some games (see the section on related work for further discussion).  

Given a game, the deterministic way players move from one mixed strategy profile to the next is defined in the theory of dynamical systems as a continuous function $\varphi$ assigning to  each point $x$ in the strategy space and each time $t>0$ another point $\varphi(t,x)$: the point where the players will be after time $t$; the curve $\varphi(t,x)$ parameterized by $t$ is called the {\em trajectory} of $x$. Obviously, this function must satisfy $\varphi(t',\varphi(t,x))=\varphi(t+t', x)$.  This {\em continuous time dynamics} framework avails a rich analytical arsenal, which we exploit in this paper.\footnote{  In game theory, the dynamics of player behavior (see e.g.~the next section) are often described in terms of discrete time. The concepts we invoke (including the Conley index) have mirror images in discrete time dynamics and our results hold there as well; see Remark \ref{remark:discrete} for more detail.}

A very well known and natural dynamical system of this sort are the {\em replicator dynamics} \citep{RD_first_paper}, in which the direction of motion is the best response vector of the players, while the discrete time analogue are the multiplicative weights update dynamics \citep{MWU_WMA_Littlestone,MWU_Arora} --- but of course the possibilities are boundless.  
We should note immediately that, despite its apparent sweeping generality, this framework does come with certain restrictions: the dynamics thus defined are {\em deterministic and memoryless.} Stochastic techniques, or dynamics in which the direction of motion is computed based on the history of play, are excluded. {\color{black} Of course deterministic and memoryless algorithms suffice for a non-constructive (i.e. non-convergent) proof  of Nash equilibria in general games via Brouwer's  fixed-point theorem~\citep{nash1951non}.}

At this point it seems natural to ask:  {\em are there general dynamics of this sort, which are guaranteed to converge to a Nash equilibrium in all games?}  Our main result is an impossibility theorem:

\medskip\noindent{\bf Main Result (informal)}: {\em There are games in which any continuous, or discrete time, dynamics fail to converge to a Nash equilibrium.}

\smallskip\noindent
That is to say, we exhibit games in which any dynamics must possess long term behavior which does not coincide with the set of Nash equilibria --- or even approximate Nash equilibria.  Thus the concept of Nash equilibria is
insufficient for a description of the global dynamics: for some games, any dynamics  must have asymptotic behaviors that are {\em not} Nash equilibria.  Hence  the Nash equilibrium concept is plagued by a form of {\em incompleteness:} it is incapable for capturing the full set of asymptotic behaviors of the players.

What does it mean for the dynamics to converge to the game's Nash equilibria?
In past work on the subject \citep{DemichelisRitzberger, DeMichelisGermano1}, the requirements that have been posed are the set of fixed points of the dynamics are precisely the set of Nash equilibria (or, that the remaining fixed points may be perturbed away).  Unfortunately, such requirements are insufficient for describing global dynamics, e.g. they allow {\em cycling}, a behavior that obviously means that not all trajectories converge to the Nash equilibria.

\paragraph{What is the appropriate notion of global convergence?}
Given a game and a dynamics, where does the dynamics ``converge''?
It turns out that this is a deep question, and it took close to a century to pin down.
In the benign case of two dimensions, the answer is almost simple:  no matter where you start, the dynamics will effectively either converge to a point, or it will cycle.  This is the Poincar\'e-Bendixson Theorem from the beginning of the 20th century \citep{SmaleChaosBook}, stating that the asymptotic behavior of any dynamics is either a fixed point or a cycle (or a slightly more sophisticated configuration combining the two).  The intuitive reason is that in two dimensions {\em trajectories cannot cross,} and this guarantees some measure of good behavior.  This is immediately lost in three or more dimensions (that is, in games other than two-by-two), since dynamics in high dimensions can be chaotic, and hence convergence becomes meaningless. Topologists strived for decades to devise a conception of a ``cycle'' that would restore the simplicity of two dimensions, and in the late 1970s they succeeded!

The definition is simple (and has much computer science appeal). Ordinarily a point is called {\em \color{black} periodic} with respect to specific dynamics if it will return to itself: it is either a fixed point, or it lies on a cycle. {\color{black} If we slightly expand our definition to allow points that get arbitrarily close to where they started infinitely often then we get the notion of {\emph recurrent points}.} Now let us generalize this as follows:  a point $x_0$ is {\em chain recurrent} if for every $\epsilon>0$ there is an integer $n$ and a cyclic sequence of points $x_0, x_1,\ldots,x_n =x_0$ such that for each $i<n$ the dynamical system will bring point $x_i$ {\em inside the $\epsilon$-ball around $x_{i+1}$.}  That is, the system, started at $x_0$, will cycle after $n-1$ segments of dynamical system trajectory, interleaved with $<\epsilon$ jumps. Intuitively, it is as if an adversary manipulates the round-off errors of our computation to convince us that we are on a cycle!   

Denote by $CR(\varphi)$ the set of all chain recurrent points of the system. The main result of this theory is a decomposition theorem due to Conley, called the Fundamental Theorem of Dynamical Systems, which states that the dynamics decompose into the chain recurrent set and a gradient-like part \citep{conley1978isolated}.  Informally, the dynamical system will eventually converge to $CR(\varphi)$.  

Now that we know what convergence means, the scope of our main result becomes more clear: there is a game $g$ for which given any dynamical system $\varphi$,  $CR(\varphi)$ is \emph{not} $NE(g)$, the set of Nash equilibria of $g$.
That is, some initial conditions will necessarily cause the players to cycle, or converge to something that is not Nash, or abandon a Nash equilibrium. This is indeed what we prove.

Very inspirational for our main result was the work of~\citep{sorin_benaim}, wherein they make in passing a statement, without a complete proof, suggesting our main result: that there are games $g$ such that for all dynamics $\varphi$ (in fact, a more general class of multi-valued dynamics)  $NE(g)\neq CR(\varphi)$. The argument sketched in~\citep{sorin_benaim} ultimately rests on the development of a fixed point index for components of Nash equilibria, and its comparison to the Euler characteristic of the set of Nash. However, the argument is incomplete (indeed, the reader is directed to two papers, one of which is a preprint that seems to have not been published).   

Instead, in our proof of our main theorem we leverage an existing index theory more closely aligned to attractors: that of the {\em Conley index.}  Conley index theory \citep{conley1978isolated} provides a very general setting in which to work, requiring minimal technical assumptions on the space and dynamics. We first establish a general principle for dynamical systems stated in terms of the Conley index: if the Conley index of an attractor and that of the entire space are not isomorphic, then there is a non-empty dual repeller, and hence some trajectories are trapped away from the attractor. 

The proof of our main result applies this principle to a classical degenerate two-player game $g$ with three strategies for each player --- a variant of rock-paper-scissors due to \citep{kohlberg1986strategic}. The Nash equilibria of $g$ form a continuum, namely a six-sided closed walk in the 4-dimensional space.  We then consider an arbitrary dynamics on $g$ assumed to converge to the Nash equilibria, and show that the Conley index of the Nash equilibria attractor is not isomorphic to that of the whole space (due to the unusual topology of the former), which implies the existence of a nonempty dual repeller.  In turn this implies that $NE(g)$ is a strict subset of $CR$.
An additional algebraic topological argument shows that the dual repeller contains a fixed point, thus $NE(g)$ is in fact a strict subset of the set of fixed points.

Two objections can be raised to this result: degenerate games are known to have measure zero --- so are there dynamics that work for almost all games?  (Interestingly, the answer is ``yes, but''.) Secondly, in view of intractability, exact equilibria may be asking too much; are there dynamics that converge to an arbitrarily good approximation of the Nash equilibrium?  There is much work that needs to be dome in pursuing these research directions, but here we show two results.

First, we establish that, in some sense, degeneracy is {\em required} for the impossibility result: we give an algorithm which, given any nondegenerate game, specifies somewhat trivial dynamics whose $CR$ is precisely the set of Nash equilibria of the game.\footnote{Assuming that PPAD $\neq$ NP, this construction provides a counterexample to a conjecture by \citet{DynMeaningOfTheGame} (last paragraph of Section 5), where impossibility was conjectured unless P = NP.} The downside of this positive result is that the algorithm requires exponential time unless P = PPAD, and {\em we conjecture that such intractability is inherent.}  In other words, we suspect that, in non-degenerate games, it is {\em complexity theory,} and not topology, that provides the proof of the impossibility result. Proving this conjecture would require the development of a novel complexity-theoretic treatment of dynamics, which seems to us a very attractive research direction.

Second, we exhibit a family of games, in fact with nonzero measure (in particular, perturbations of the game used for our main result), for which any dynamics will fail to converge (in the above sense) to an $\epsilon$-approximate Nash equilibrium, for some fixed additive $\epsilon>0$ (our technique currently gives an $\epsilon$ up to about $0.09$ for utilities normalized to $[0,1]$).

\section{Related work}
The work on dynamics in games is vast, starting from the 1950s with fictitious play \citep{BrownFictitious,RobinsonFictitious}, the first of many alternative definitions of dynamics that converge to Nash equilibria in zero-sum games (or, sometimes, also in $2\times s$ games), see, e.g., \citet{KaniovskyYoung}.
There are many wonderful books about dynamics in games: \citet{Fudenberg} examine very general dynamics, often involving learning (and therefore memory) and populations of players; populations are also involved, albeit implicitly, in evolutionary game dynamics, see the book of \citet{HofbauerSigmund} for both inter- and intra-species games, the book of \citet{sandholm_population_2010} for an analysis of both deterministic and stochastic dynamics of games, the book of \citet{weibull_evolutionary_1998} for a viewpoint on evolutionary dynamics pertaining to rationality and economics and the book of \citet{hart2013simple} focusing on simple dynamics including some of the earliest impossibility results for convergence to Nash for special classes of uncoupled dynamics \citep{hart2003uncoupled}.

Regarding convergence to Nash equilibria, we have already discussed the closely related work of \citet{sorin_benaim}. The work of \citet{DemichelisRitzberger} considers \emph{Nash dynamics}, whose fixed points are precisely the set of Nash equilibria, while \citet{DeMichelisGermano1} considers {\em Nash fields}, where fixed points which are not Nash equilibria may be perturbed away; in both cases they use (fixed point) index theory to put conditions on when components of Nash equilibria are stable.  Here we point out that both Nash dynamics and Nash fields (akin to what we call Type I or Type II dynamics here) have the undesirable property of recurrence. Uncoupled game dynamics (where each player decides their next move in isolation) of several forms are considered by \citet{HartMasColell2}; some are shown to fail to converge to Nash equilibria through ``fooling arguments,'' while another {\em is shown to converge} --- in apparent contradiction to our main result. The converging dynamic is very different from the ones considered in our main theorem: it is converging to approximate equilibria (but we have results for that), and is discrete-time (our results apply to this case as well). The important departure is that the dynamics is {\em stochastic,} and such dynamics can indeed converge almost certainly to approximate equilibria.

From the perspective of optimization theory and theoretical computer science,  regret-minimizing dynamics in games has been the subject of careful investigation.
The standard approach examines their time-averaged behaviour and focuses on its convergence to coarse correlated equilibria, (see, e.g.,~\citep{roughgarden2015intrinsic,stoltz2007learning}).
This type of analysis, however, is not able to capture the evolution of day-to-day dynamics.
Indeed, in many cases, such dynamics are non-convergent in a strong formal sense even for the seminal class of zero-sum games~\citep{piliouras2014optimization,mertikopoulos2018cycles,bailey2019multi,BaileyEC18}.
Perhaps even more alarmingly strong time-average convergence guarantees may hold regardless of whether the system is divergent~\citep{BaileyEC18}, periodic~\citep{boone2019darwin}, recurrent~\citep{bailey2020finite}, or even formally chaotic~\citep{palaiopanos2017multiplicative,CFMP2019,cheung2019vortices,cheung2020chaos,bielawski2021follow,kuen2021chaos}. Recently \textit{all}  FTRL dynamics have been shown to fail to achieve (even local) asymptotic stability on \textit{any} partially mixed Nash in effectively \textit{all} normal form games despite their optimal regret guarantees \citep{flokas2020no,pmlr-v134-giannou21a}. Finally, \citet{andrade21} establish that the orbits of replicator dynamics can be \textit{arbitrarily complex}, e.g., form Lorenz-chaos limit sets, even in two agent normal form games.

The proliferation of multi-agent architectures in machine learning such as  Generative Adversarial Networks (GANs) along with the aforementioned failure of standard learning dynamics to converge to  equilibria (even in the special case of zero-sum games)
has put strong emphasis on alternative algorithms as well as the development of novel learning algorithms. In the special case of zero-sum games, several other algorithms  converge provably to Nash equilibria such as optimistic mirror descent \citep{rakhlin2013optimization,daskalakis2018training,daskalakis2018last}, the extra-gradient method (and variants thereof) \citep{korpelevich1976extragradient,gidel2019a,mertikopoulos2019optimistic}, as well as several other dynamics \citep{gidel2019negative,letcher2019differentiable,perolat2021poincare}. Such type of results raise a hopeful optimism that maybe a simple, practical algorithm exists that reliably converges to Nash equilibria in all games \textit{at least asymptotically}. 

Naturally, the difficulty of learning Nash equilibria grows significantly  when one broadens their scope to a more general class of games than merely zero-sum games~\citep{daskalakis2010learning,paperics11,galla2013complex,DynMeaningOfTheGame}. 
Numerical studies suggest that chaos is typical~\citep{sanders2018prevalence} and emerges even in low dimensional systems~\citep{sato2002chaos,palaiopanos2017multiplicative, 2017arXiv170109043P}. 
Such non-convergence results have inspired a program on the intersection of game theory and dynamical systems \citep{Entropy18,DynMeaningOfTheGame}, specifically using Conley's fundamental theorem of dynamical systems \citep{conley1978isolated}. Interestingly, it is exactly Conley index theory that can be utilized to establish a universal negative result for game dynamics, even if we relax our global attractor requirements from the set of exact Nash to approximations  thereof.  In even more general context, {\em computational} (as opposed to topological) impossibility results are known for the problem of finding price equilibria in markets \citep{PapYannPrice}. If one extends even further to  the machine learning inspired class of differential games several negative results have recently been established~\citep{letcher2021impossibility,balduzzi2020smooth,hsieh2021limits,farnia2020gans}.

\section{Preliminaries on Game Theory}

In this section we outline some fundamental game theoretic concepts, essential for the development of our paper. For more detailed background, we refer the interested reader to the books by \citet{AGT_main_book,HofbauerSigmund,weibull_evolutionary_1998}.

Consider a finite set of $K$ players, each of whom has a finite set of actions/strategies (say, for example, that player $k\in [K]$ can play any strategy $s_k$ within their strategy space $S_k$). The utility that player $k$ receives from playing strategy $s_k\in S_k$ when the other players of the game choose strategies $s_{-k} \in \prod_{l\in [K]\setminus \{k\}} S_l$ is a function $u_k : \prod_{l\in [K]} S_l \to \R$. These definitions are then multilinearly extended into their mixed/randomized equivalents, using probabilities as the weights of the strategies and utilities with respect to the strategies chosen. We call the resulting triplet $(K, \prod_{l\in [K]} S_l, \prod_{l\in [K]} u_l)$ a game in {\em normal form}.

A two-player game in normal form is said to be a {\em bimatrix} game, because the utilities in this case may equivalently be described by two matrices: $M_1\in\R^{m\times n}, M_2\in\R^{n\times m}$ for the two players respectively, when player 1 has $m$ (pure) strategies available and player 2 has $n$ (pure) strategies available. We say that such a game is an $m\times n$ bimatrix game. For mixed strategy profiles $x \in \Rpm, y \in \Rpn$ with $\sum_{i\in [m]} x_i = 1, \sum_{j\in [n]} y_j = 1$, we say that $(x, y)$ is a Nash equilibrium for the bimatrix game as above if
\[
\begin{cases}
\langle x, M_1 y \rangle \geq \langle x', M_1 y \rangle\ \text{for all } x' \in \Rpm, \sum_{i\in [m]} x_i' = 1 \\
\langle x, M_2 y \rangle \geq \langle x, M_2 y' \rangle\ \text{for all }  y' \in \Rpn, \sum_{j\in [n]} y_j' = 1.
\end{cases}
\]
(or alternatively, that $x$ is best-response of player 1 to player 2's $y$ mixed strategy, and $y$ is best-response of player 2 to player 1's $x$ mixed strategy) where $\langle \cdot, \cdot \rangle$ denotes the inner product of the respective vector space. We also say that $(x, y)$ is an $\epsilon$-approximate Nash equilibrium for the bimatrix game if
\[
\begin{cases}
\langle x, M_1 y \rangle \geq \langle x', M_1 y \rangle - \epsilon\ \text{for all } x' \in \Rpm, \sum_{i\in [m]} x_i' = 1 \\
\langle x, M_2 y \rangle \geq \langle x, M_2 y' \rangle - \epsilon\ \text{for all } y' \in \Rpn, \sum_{j\in [n]} y_j' = 1.
\end{cases}
\]

Finally, we denote by $(M_1 y)_i$ the $i$-th coordinate of the vector $M_1 y\in\R^m$.

\section{Background in Conley Theory}

In this section we review fundamental notions of dynamical systems, especially from the point of view of Conley theory.  The standard reference for chain recurrence, Conley's decomposition theorem and the Conley index is the monograph \cite{conley1978isolated}.  An approachable introduction to chain recurrence and Conley's decomposition theorem is~\cite{norton1995fundamental}, wherein the theorem is christened `The Fundamental Theorem of Dynamical Systems'.  A concise and rigorous treatment of the  theorem is given in \cite[Chapter 9]{robinson1998dynamical}.  

Attractors and repellers are of central importance in dynamical systems theory, as they form the means of filtering the global dynamics, and are dual to decomposing the behavior into recurrent and gradient-like parts.  These are particularly important topics in Conley theory, and an algebraic treatment of attractors using order theory (and developing the corresponding duality theory to Morse decompositions and chain recurrence) is given in \cite{robbin1992lyapunov,kmv,kmv2,kmv3,kaliespriestly}.  A concise review of the Conley index is found in \cite{salamon1985connected}, and overviews in \cite{mischaikow1995conley, mischaikow2002conley}.  The relationship between Conley indices is captured using connection matrix theory \cite{franzosa1986index,franzosa1988continuation,franzosa1988connection, franzosa1989connection,harker2021computational}, and our main theorem (Theorem \ref{thm:imposs}) uses elementary results in this direction.

Conley index theory has been extended to discrete time dynamics \cite{arai2002tangencies, mrozek1990leray, franks2000shift, richeson1998connection, robbin1992lyapunov, szymczak1995conley} (see applications in \cite{arai2009database,bush2012combinatorial}), as well as multi-valued dynamics~\cite{kaczynski1995conley}.

\subsection{Attractors and repellers}

We will consider a compact metric space $X$ and a dynamical system $\varphi$ on $X$
(in our application of interest, $X$ will be a product of simplices).
As we have already mentioned, the standard mathematical conception of a dynamics is a \emph{semi-flow}, a continuous map $\varphi:  \Rp \times X\to X$, where $\Rp=\{t\in \R: t\geq 0\}$, satisfying i) $\varphi(0,x)=x$ and ii) $\varphi(s+t,x)=\varphi(s,\varphi(t,x))$.  If in this definition $\Rp$ is replaced with $\R$ then $\varphi$ is called a \emph{flow}.\footnote{The distinction between semi-flows and flows is `negative time', which corresponds to invertibility of the dynamics.}  By a dynamical system, or dynamics, we mean either a flow or semi-flow.  We are interested in the asymptotic behavior of dynamics:
for a set $Y\subset X$, we define the $\omega$-limit set of $Y$ to be $\omega(Y) = \bigcap_{t>0} \text{cl}\big(\varphi([t,\infty), Y)\big)$.

One basic notion in dynamical systems is that of an {\em attractor}: intuitively, a set to which trajectories converge asymptotically.  A set $A\subset X$ is an {attractor} if there exists a neighborhood $U$ of $A$ such that $\omega(U)= A$.  This is a useful concept for our purposes: we seek a flow such that the set of Nash equilibria of a game (and collection of connecting orbits between) is an attractor.   The \emph{dual repeller} of an attractor $A$ is defined as the set
\[
R = \{x\in X: \omega (x)\cap A = \emptyset\};
\] that is, the set of points that do not converge to $A$.
The pair $(A,R)$ is called an {\em attractor-repeller pair}.  
The concept of repeller is also key for us: it is not enough that the set of Nash equilibria and connecting orbits constitute an attractor; the corresponding repeller should be empty --- otherwise some starting points will never reach the Nash equilibria.  The set of attractors of $\varphi$, denoted by $\sAtt(\varphi)$, together with binary operations:
\[
A\vee A' := A\cup A',\quad\quad A\wedge A' := \omega (A \cap A'),
\]
and partial order determined by inclusion, forms a bounded distributive lattice~\cite{kmv, kmv2, robbin1992lyapunov}.  Similarly, the set of repellers forms a bounded, distributive lattice $\sRep(\varphi)$, which is (anti)-isomorphic to $\sAtt(\varphi)$ \cite{kmv}.




\subsection{Chain recurrence}

An \emph{$(\epsilon, \tau)$ chain} from $x$ to $y$ is a finite sequence $\{(x_i,t_i)\}\subset X\times [0,\infty)$, $i=1,\ldots,n$, such that $x=x_1$, $t_i\geq \tau$, $d(\varphi(t_i,x_i),x_{i+1})\leq \epsilon$ and $d(\varphi(t_n,x_n),y)\leq \epsilon$.  If there is an $(\epsilon,\tau)$ chain from $x$ to $y$ for all $(\epsilon,\tau)$ we write $x\geq y$.  The \emph{chain recurrent set}, denoted, $CR(\varphi)$, is defined by
\[
CR(\varphi) = \{x\in X:x\geq x\}.
\]
The chain recurrent set can be partitioned into a disjoint union of its connected components, called \emph{chain recurrent components}; the chain recurrent components are partially ordered by the dynamics (inheriting the partial order of $\geq)$~\cite{conley1978isolated}.

The fundamental theorem of dynamical systems, due to Conley, states that a dynamical system can be decomposed into the (chain) recurrent set, and off of the chain recurrent set the dynamics is gradient-like (on a trajectory between two chain recurrent components). Thus there is a deep dichotomy between points in the phase space: they are either chain recurrent or gradient-like. This can be phrased in terms of a Lyapunov function as follows.

\begin{theorem}[Fundamental Theorem of Dynamical Systems]
Let $CR_i(\varphi)$ denote the chain components of $CR(\varphi)$.  There exists a continuous function $V: X\to [0,1]$ such that
\begin{enumerate}
    \item if $x\not\in CR(\varphi)$ and $t>0$ then $V(x)>V(\varphi(t,x))$;
    \item for each $i$, there exists $\sigma_i\in [0,1]$ such that $CR_i(\varphi)\subset V^{-1}(\sigma_i)$ and furthermore, the $\sigma_i$ can be chosen such that $\sigma_i\neq \sigma_j$ if $i\neq j$.
\end{enumerate}
\end{theorem}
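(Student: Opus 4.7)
The plan is to construct $V$ as a countable weighted sum of ``elementary'' Lyapunov functions, one per attractor-repeller pair, and then read off properties (1) and (2) from the corresponding properties of the summands. Throughout I rely on two pillars from Conley theory already in play in the paper: the lattice structure of $\sAtt(\varphi)$, and the characterization of $CR(\varphi)$ as the intersection of $A\cup R$ over all attractor-repeller pairs.

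\emph{Step 1 (single pair).} For each attractor-repeller pair $(A,R)$, I would construct a continuous $V_{A,R}:X\to[0,1]$ with $V_{A,R}^{-1}(0)=A$, $V_{A,R}^{-1}(1)=R$, and $V_{A,R}(\varphi(t,x))<V_{A,R}(x)$ for all $t>0$ and $x\notin A\cup R$. The standard recipe begins with $g(x):=d(x,A)/(d(x,A)+d(x,R))$, passes to the non-increasing version $\tilde g(x):=\sup_{t\geq 0}g(\varphi(t,x))$, and then smooths via the time integral $V_{A,R}(x):=\int_0^\infty e^{-s}\tilde g(\varphi(s,x))\,ds$ in order to upgrade non-strict monotonicity to strict decrease off $A\cup R$.

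\emph{Step 2 (countable separating family).} Next I would extract a countable sublattice $\{(A_n,R_n)\}_{n\geq 1}\subset\sAtt(\varphi)\times\sRep(\varphi)$ that \emph{separates chain components}: $x$ and $y$ lie in the same component of $CR(\varphi)$ iff for every $n$, $x\in A_n\Leftrightarrow y\in A_n$. The existence of such a family rests on second countability of the compact metric space $X$ (so that a countable base of ``trapping'' open sets suffices to generate the relevant attractors) together with Conley's identity $CR(\varphi)=\bigcap_{A\in\sAtt(\varphi)}(A\cup R_A)$, which simultaneously gives $\bigcap_n(A_n\cup R_n)=CR(\varphi)$ and aligns the induced equivalence classes with the chain components.

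\emph{Step 3 (assembly).} Finally, set
\[
V(x) \;:=\; \sum_{n=1}^{\infty} 3^{-n}\, V_{A_n,R_n}(x).
\]
Uniform convergence of this series and continuity of each summand give continuity of $V$. For (1), any $x\notin CR(\varphi)$ falls outside some $A_n\cup R_n$, and then $V_{A_n,R_n}$ strictly decreases along the orbit through $x$ while every other summand is non-increasing, yielding $V(x)>V(\varphi(t,x))$ for all $t>0$. For (2), each $CR_i(\varphi)$ sits entirely in $A_n$ or entirely in $R_n$ for every $n$, so each $V_{A_n,R_n}$ (and hence $V$) is constant on $CR_i(\varphi)$ with value $\sigma_i$; the separating property from Step~2 guarantees $\sigma_i\neq\sigma_j$ whenever $i\neq j$.

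The hard part will be Step~2: establishing Conley's attractor/chain-recurrence duality and extracting a genuinely \emph{countable} sublattice whose induced partition on $X\setminus CR(\varphi)$ is exactly the chain-component partition. Step~1's smoothing to obtain strict rather than merely non-strict decrease is technically finicky but well understood, and Step~3 is essentially bookkeeping once the right ingredients are in hand.
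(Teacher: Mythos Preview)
The paper does not prove this theorem at all: it is stated as background material and attributed to \citet{conley1978isolated}, with the reader directed to \citet{norton1995fundamental} and \citet[Chapter~9]{robinson1998dynamical} for expositions. So there is no ``paper's own proof'' to compare against.

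That said, your proposal is the standard argument found in precisely those references. The three steps---building an elementary Lyapunov function for a single attractor-repeller pair via the distance-ratio/sup/time-average construction, extracting a countable separating family of pairs from second countability together with Conley's identity $CR(\varphi)=\bigcap_{A}(A\cup R_A)$, and assembling via a ternary-weighted sum so that distinct $\{0,1\}$-sequences yield distinct values---are exactly Conley's original construction as streamlined by later authors. Your assessment of where the difficulty lies (Step~2's duality and countability, and the continuity/strict-decrease verification in Step~1) is accurate. One small point worth making explicit when you carry this out: the continuity of $\tilde g(x)=\sup_{t\ge 0}g(\varphi(t,x))$ is not free and genuinely uses that $A$ is an attractor (to control forward orbits near $A$); this is the place in Step~1 where the dynamics, rather than pure topology, enters.
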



\subsection{Conley index}

The broader focus of Conley theory is on isolating neighborhoods and isolated invariant sets.  If $\varphi$ is a semi-flow, then a set $S$ is an \emph{invariant set} if $\varphi(t,S)=S$ for all $t\in \Rp$. 
Given a set $N$ we denote the \emph{maximal invariant set} in $N$ by
\[
\Inv(N) = \bigcup \{S\subset N : \text{$S$ is an invariant set} \}.
\]
An \emph{isolating neighborhood} is a compact set $N$ such that $\Inv(N)\subset \Int (N)$.   Both an attractor and its dual repeller are examples of isolated invariant sets.

The construction of the Conley index proceeds through index pairs.  For an isolated invariant set $S$, an \emph{index pair} is a pair of compact sets $(N,L)$ with $L\subset N$ such that
\begin{enumerate}
    \item $S = \text{Inv}(\text{cl}(N\setminus L))$ and $N\setminus L$ is a neighborhood of $S$,
    \item $L$ is positively invariant in $N$, i.e., if $x\in L$ and $\varphi([0,t],x) \subset N$, then $\varphi([0,t],x)\subset L$.
    \item $L$ is an exit set for $N$, i.e., if $x\in N$ and $t_1>0$ such that $\varphi(t_1,x)\not\in N$ then there exists a $t_0\in [0,t_1]$ for which $\varphi([0,t_0],x)\subset N$ and $\varphi(t_0,x)\in L$.
\end{enumerate}

The \emph{Conley index} of an isolated invariant set $S$ is the (relative) homology group
\[
CH_\bullet(S) := H_\bullet (N,L),
\]
where $H_\bullet$ denotes singular homology with integer coefficients\footnote{See \ref{app:singular} for a primer on algebraic topology and singular homology theory.} and $(N,L)$ is an index pair for $S$.  The Conley index is independent of the particular choice of index pair and  is an algebraic topological invariant of an isolated invariant set; it is a purely topological generalization of another, integer-valued, invariant called the {\em Morse index}, see \cite{conley1978isolated, MI}. Moreover index pairs can be found so that $H_\bullet(N,L)\cong \tilde{H}_\bullet(N/L)$.   See Fig. \ref{fig:conley_indices} for elementary Conley index computations for four example isolated invariant sets.\footnote{The final three examples in Fig.~\ref{fig:conley_indices} instantiate a general result: if $S$ is a hyperbolic fixed point with an unstable manifold of dimension $n$ (i.e., the number of positive eigenvalues of the Jacobian), then $CH_i(S) = \Z$ if $i=n$, otherwise $CH_i(S) = 0$.}   In each case, an appropriate index pair $(N,L)$ is illustrated; the Conley index is computed via $CH_\bullet(S) = H_\bullet(N,L)\cong \tilde{H}_\bullet(N/L)$, the reduced homology of $N/L$.

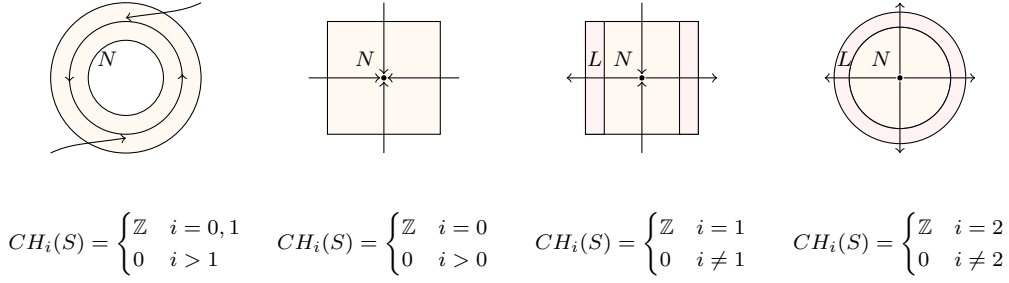
\begin{figure}[h!]
\centering
\begin{minipage}{.2\textwidth}
    \centering
    \begin{tikzpicture}[scale=.5]
    \def\x{.1}
    \def\y{2}
    \def\z{1.5}
    
    \filldraw [fill=orange!5,even odd rule] (0,0) circle[radius=2cm] circle[radius=1cm];
    \draw (0,0) circle [radius=1.5cm];
    
    \draw [->] (\z,0) -- (\z,.1);
    \draw [->] (-\z,0) -- (-\z,-.1);

    \draw (-.5,.5) node{\scriptsize $N$};

   \draw[->] (2,2) .. controls (1.5,1.75) and (1,1.75)  .. (0, 1.6);
   \draw[->] (-2,-2) .. controls (-1.5,-1.75) and (-1,-1.75)  .. (0, -1.6);

    \end{tikzpicture}
    
    {\scriptsize 
    \[ CH_i(S) = \begin{cases} 
          \Z & i = 0,1 \\
          0 & i > 1
       \end{cases}
    \]
    }
    \end{minipage}
    \begin{minipage}{.2\textwidth}
    \centering
    \begin{tikzpicture}[scale=.5]
    
    \def\x{.1}
    \def\y{2}
    \def\z{1.5}
    \filldraw[fill=orange!5] (-\z,-\z) rectangle (\z,\z);
    
    \fill (0,0) circle (2pt);
    \draw[->] (0:\y) to (0:\x);
    \draw[->] (90:\y) to (90:\x);
    \draw[->] (180:\y) to (180:\x);
    \draw[->] (270:\y) to (270:\x);

    \draw (-.5,.5) node{\scriptsize $N$};
    \end{tikzpicture}
    
    {\scriptsize 
    \[ CH_i(S) = \begin{cases} 
          \Z & i = 0 \\
          0 & i > 0
       \end{cases}
    \]
    }
    \end{minipage}
        \begin{minipage}{.2\textwidth}
    \centering
    \begin{tikzpicture}[scale=.5]
    \def\x{.1}
    \def\y{2}
    \def\z{1.5}
    \def\h{.5}
    \filldraw[fill=orange!5] (-\z,-\z) rectangle (\z,\z);
    \filldraw[fill=red!5] (-\z,-\z) rectangle (-\z+\h,\z);
    \filldraw[fill=red!5] (\z,-\z) rectangle (\z-\h,\z);

    \fill (0,0) circle (2pt);
    \draw[->] (0:\x) to (0:\y);
    \draw[->] (90:\y) to (90:\x);
    \draw[->] (180:\x) to (180:\y);
    \draw[->] (270:\y) to (270:\x);
    
    \draw (-.5,.5) node{\scriptsize $N$};
    \draw (-1.25,.5) node{\scriptsize $L$};

    \end{tikzpicture}
    
    {\scriptsize 
    \[ CH_i(S) = \begin{cases} 
          \Z & i = 1 \\
          0 & i \neq 1
       \end{cases}
    \]
    }
\end{minipage}
\begin{minipage}{.2\textwidth}
    \centering
    \begin{tikzpicture}[scale=.5]
    \def\x{.1}
    \def\y{2}
    \def\z{1.5}
    \def\h{.5}
    \filldraw[fill=orange!5] (0,0) circle (1.35cm);
    \filldraw [fill=red!5,even odd rule] (0,0) circle[radius=1.75cm] circle[radius=1.35cm];

    \fill (0,0) circle (2pt);
    \draw[->] (0:\x) to (0:\y);
    \draw[->] (90:\x) to (90:\y);
    \draw[->] (180:\x) to (180:\y);
    \draw[->] (270:\x) to (270:\y);
    
    \draw (-.5,.5) node{\scriptsize $N$};
    \draw (-1.475,.5) node{\scriptsize $L$};

    \end{tikzpicture}
    
    {\scriptsize 
    \[ CH_i(S) = \begin{cases} 
          \Z & i = 2 \\
          0 & i \neq 2
       \end{cases}
    \]
    }
\end{minipage}
    \caption{Stable periodic orbit with $N$ as the annulus and $L=\emptyset$ [left]. Asymptotically stable fixed point with $L=\emptyset$ [center left].   Hyperbolic saddle point with 1-dimensional unstable manifold [center right]. Hyperbolic unstable fixed point with 2-dimensional unstable manifold, $N$ as the ball, $L$ as the thickened boundary [right].}
    \label{fig:conley_indices}
\end{figure}

A fundamental property of the Conley index is the following.

\begin{proposition}[Wa{\.z}ewski Property]\label{prop:wazewski}
 If $S$ is an isolated invariant set and $S=\emptyset$, then $CH_\bullet(S)= 0$.  More importantly, if $CH_\bullet(S)\neq 0$ then $S\neq \emptyset$.
\end{proposition}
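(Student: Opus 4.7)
The plan is to prove the first assertion directly and then obtain the second as its contrapositive. The strategy for the first is to exhibit an index pair for $S=\emptyset$ whose relative homology is manifestly trivial, and then invoke the independence of the Conley index from the choice of index pair (a foundational result of Conley theory recalled in the references).

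Concretely, I would take $(N,L)=(\emptyset,\emptyset)$ and verify the three index-pair axioms in order. For (1), $\mathrm{cl}(N\setminus L)=\emptyset$, so $\Inv(\mathrm{cl}(N\setminus L))=\emptyset=S$, and $N\setminus L=\emptyset$ is vacuously a neighborhood of $S$. For (2), positive invariance of $L$ in $N$ is vacuous since $L$ contains no points. For (3), the exit-set condition is likewise vacuous. Hence $(\emptyset,\emptyset)$ is a legitimate index pair for the empty invariant set. By well-definedness of the Conley index,
\[
CH_\bullet(\emptyset)\;=\;H_\bullet(\emptyset,\emptyset)\;=\;0
\]
in every degree. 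Equivalently, via the identification $H_\bullet(N,L)\cong \tilde{H}_\bullet(N/L)$ recalled earlier, the pointed space $N/L$ is a single basepoint, whose reduced homology vanishes in all degrees. Either computation yields the first assertion.

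The second assertion is then immediate: if $S=\emptyset$ then $CH_\bullet(S)=0$, so by contraposition $CH_\bullet(S)\neq 0$ forces $S\neq \emptyset$. I do not expect a genuine obstacle in carrying out this plan; the argument is essentially a definitional unwinding once one has the independence of the Conley index from the chosen index pair in hand, and that independence is a standard (albeit nontrivial) theorem from Conley's monograph. The real content of Proposition~\ref{prop:wazewski} lies not in its short proof but in its downstream use: it converts a purely algebraic-topological nonvanishing computation into the existence of invariant dynamics, and this is precisely the mechanism that will drive the main impossibility theorem of the paper.
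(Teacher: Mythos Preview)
The paper does not actually supply a proof of Proposition~\ref{prop:wazewski}; it is stated as a foundational fact imported from Conley's monograph and used as a black box. Your argument is the standard one and is correct: $(\emptyset,\emptyset)$ satisfies the three index-pair axioms vacuously, the relative singular chain complex of $(\emptyset,\emptyset)$ is identically zero in every degree, and the second assertion is the contrapositive. There is nothing to compare against here beyond noting that your write-up fills in exactly the routine verification the paper elects to omit.
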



The Wa{\.z}ewski Property shows that the Conley index $CH_\bullet(S)$ can be used to deduce information about the structure of the associated isolated invariant set $S$.  The Wa{\.z}ewski Property is the most elementary result of this type.  More sophisicated results can be used to prove the existence of connecting orbits \cite{franzosa1989connection}, periodic orbits \cite{mccord1995zeta}, or chaos \cite{mischaikow1995chaos}, and in the case that $\varphi$ is a flow, \citet{mccord_1988} has shown Lefschetz fixed point type theorems for Conley indices.

\begin{proposition}[Fixed point theorem for Conley indices]\label{prop:fp_conley}
If $S$ is an isolated invariant set $S$ for a flow $\varphi$ and there is an $n$ such that
\[
 CH_i(S) = \begin{cases} 
      \Z & i = n \\
      0 & i \neq n,
   \end{cases}
\]
then $S$ contains a fixed point.
\end{proposition}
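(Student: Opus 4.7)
The plan is to follow the Lefschetz--Hopf style argument of \citet{mccord_1988}. The hypothesis forces the Euler characteristic of the Conley index to be $\chi(CH_\bullet(S)) = (-1)^n \neq 0$, and I would leverage this to produce periodic orbits in $S$ of arbitrarily small period, and then extract a rest point by compactness. The three ingredients are: constructing a continuous flow-induced endomorphism on the pointed space $N/L$, computing its Lefschetz index, and passing to the limit $t \to 0^+$.

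Fix an index pair $(N,L)$ for $S$, so that $CH_\bullet(S) \cong \tilde H_\bullet(N/L)$. For each $t \geq 0$, define a based map $f_t : (N/L, [L]) \to (N/L, [L])$ by
\[
f_t([x]) = \begin{cases} [\varphi(t,x)] & \text{if } \varphi([0,t], x) \subset \text{cl}(N \setminus L), \\ [L] & \text{otherwise}. \end{cases}
\]
The positive invariance and exit-set axioms of an index pair make $f_t$ well-defined and continuous. Because the family $\{f_s\}_{s\in [0,t]}$ is continuous in $s$ and connects $f_0 = \mathrm{id}$ to $f_t$, the map $f_t$ is homotopic to the identity as a based map, and therefore acts as the identity on $\tilde H_\bullet(N/L) \cong CH_\bullet(S)$.

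Since $f_t$ induces the identity on the Conley index, its reduced Lefschetz number equals $\chi(CH_\bullet(S)) = (-1)^n$. The basepoint $[L]$ is a trivial fixed point, and one can isolate a small neighborhood in $N/L$ on which $f_t$ retracts onto $[L]$, giving local fixed-point index $1$ there. By additivity of the Lefschetz--Hopf fixed-point index, the remaining index on $N \setminus L$ is $(-1)^n \neq 0$, which forces the existence of a fixed point $x_t \in N \setminus L$ with $\varphi([0,t],x_t) \subset \text{cl}(N \setminus L)$ and $\varphi(t,x_t) = x_t$. By the flow invariance of this condition and the maximality in $\Inv(\text{cl}(N \setminus L))=S$, the full orbit of $x_t$ lies in $S$ and is periodic of period dividing $t$.

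Finally, choose $t_k \downarrow 0$ and, using compactness of $S$, extract a convergent subsequence $x_{t_k} \to x^* \in S$. For any fixed $s > 0$ and $k$ large enough that $t_k < s$, the $t_k$-periodicity yields $\varphi(s,x_{t_k}) = \varphi(s \bmod t_k, x_{t_k})$; as $s \bmod t_k \to 0$ and $\varphi$ is continuous, we obtain $\varphi(s, x^*) = x^*$, so $x^*$ is the desired rest point. The main obstacle is the fixed-point-index computation in the third step: correctly localizing the Lefschetz--Hopf index to $N \setminus L$ and accounting for the basepoint contribution requires the flow hypothesis (invertibility) and some care with the index-pair construction, which is exactly why the proposition is stated for flows rather than semi-flows. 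The other steps are routine consequences of the index-pair axioms and compactness of $S$.
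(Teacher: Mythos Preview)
The paper does not supply its own proof of this proposition; it is quoted as a result of \citet{mccord_1988}. Your sketch follows exactly the Lefschetz--Hopf strategy of that reference: the hypothesis gives $\chi(CH_\bullet(S))=(-1)^n\neq 0$, the flow-induced index map $f_t$ on $N/L$ acts as the identity on homology, and after subtracting the basepoint contribution one obtains $t$-periodic points in $S$ for every small $t>0$, whose accumulation is a rest point. The outline is sound, and the limiting step is correct since $s\bmod t_k\in[0,t_k)\to 0$. The one place requiring genuine care---which you rightly flag---is the local fixed-point index at $[L]$: the claim that $f_t$ collapses a neighborhood of the basepoint (yielding local index $+1$) is \emph{not} automatic for an arbitrary index pair, but holds once one passes to an isolating block with the flow transverse along the exit set, and such blocks exist precisely because $\varphi$ is a flow. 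Note also that for odd $n$ the unreduced Lefschetz number $1+(-1)^n$ vanishes, so this basepoint subtraction is essential rather than cosmetic. With those refinements your argument is the standard one.
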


We prove our main result by examining the relationship between the Conley indices of an attractor-repeller pair.  These Conley indices are related by an exact sequence~\cite{conley1978isolated, franzosa1988connection,franzosa1986index}, which follows from the long exact sequence of the triple \cite{hatcher}.

\begin{proposition}[Exact sequence for attractor-repeller pair]\label{prop:les}
If $(A,R)$ is an attractor-repeller pair, then there is an exact sequence of Conley indices:
\[
\cdots \xrightarrow{j_{n+1}}CH_{n+1}(R)\xrightarrow{\partial_{n+1}} CH_n(A)\xrightarrow{i_n} CH_n(X)\xrightarrow{j_n} CH_n(R)
\xrightarrow{\partial_n} 
\cdots
\xrightarrow{j_0} CH_0(R)\to 0.
\]
\end{proposition}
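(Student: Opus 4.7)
The plan is to reduce the exact sequence of Conley indices to the long exact sequence of a triple in singular homology. The key step is to construct an \emph{index filtration} that simultaneously realizes index pairs for $A$, $R$, and $X$.

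First, I would construct a nested triple of compact sets $N_0\subset N_1\subset N_2\subset X$ such that $(N_1,N_0)$ is an index pair for the attractor $A$, $(N_2,N_1)$ is an index pair for the dual repeller $R$, and $(N_2,N_0)$ is an index pair for the maximal invariant set of $X$ (which for a flow on $X$ is simply $X$). The existence of such an index filtration is the technical heart of the argument and can be built in a few standard ways: one approach uses a Lyapunov function $V\colon X\to[0,1]$ adapted to the pair $(A,R)$ (constant on each of $A$ and $R$, strictly decreasing on trajectories outside these sets), yielding $N_0$ as a thickened sublevel set near $A$'s exit, $N_1$ as a sublevel set separating $A$ from $R$, and $N_2$ as a thickened superlevel set containing $R$. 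An alternative approach uses the structure of attractor/repeller blocks (positively/negatively invariant neighborhoods). In either case, the axioms for an index pair --- positive invariance of the smaller set within the larger, and the exit-set property --- can be verified using the defining convergence/divergence behavior of points relative to $A$ and $R$ (e.g., $x\in N_1$ drifts toward $A$ before escaping through $N_0$, while $x\in N_2\setminus N_1$ either stays or exits through $N_1$).

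Given such a triple, the proof then proceeds mechanically by invoking the long exact sequence of the triple $(N_2,N_1,N_0)$ in singular homology \cite{hatcher}:
\[
\cdots\to H_{n+1}(N_2,N_1)\xrightarrow{\partial_{n+1}} H_n(N_1,N_0)\xrightarrow{i_n} H_n(N_2,N_0)\xrightarrow{j_n} H_n(N_2,N_1)\xrightarrow{\partial_n}\cdots .
\]
By construction, these relative homology groups are exactly the Conley indices: $H_n(N_1,N_0)=CH_n(A)$, $H_n(N_2,N_0)=CH_n(X)$, and $H_n(N_2,N_1)=CH_n(R)$. Substituting these identifications yields the claimed sequence, with $i_n$ induced by inclusion of the pair $(N_1,N_0)\hookrightarrow(N_2,N_0)$, $j_n$ induced by the quotient $(N_2,N_0)\to(N_2,N_1)$, and $\partial_n$ the connecting homomorphism of the triple. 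The sequence terminates at $CH_0(R)\to 0$ simply because this is where the long exact sequence of a triple ends in (unreduced) singular homology.

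The main obstacle, as indicated above, is constructing the index filtration; once this is in hand, the rest of the argument is a direct translation of a standard algebraic-topological fact. Fortunately, the existence of index filtrations for attractor-repeller pairs (and more generally for Morse decompositions) is by now classical in Conley theory --- see~\cite{franzosa1986index,franzosa1988connection,salamon1985connected}. One minor subtlety to flag is that since $(A,R)$ need not exhaust the invariant set of $X$ in a semi-flow without global convergence, one should take care that the ``ambient'' index pair $(N_2,N_0)$ is chosen to isolate the entire invariant set sitting between $A$ and $R$; this is implicit in the statement's use of $CH_\bullet(X)$ and is handled by the filtration construction.
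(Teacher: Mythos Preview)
Your proposal is correct and matches the paper's approach exactly: the paper does not give a detailed proof but simply remarks that the sequence ``follows from the long exact sequence of the triple'' and cites \cite{conley1978isolated,franzosa1988connection,franzosa1986index} for the existence of the index filtration. Your write-up is a faithful elaboration of precisely this argument.
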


\medskip\noindent
We are now in place to prove a result about Conley indices of an attractor-repeller pair, which is central to our impossibility theorem.

\begin{theorem}\label{prop:r0}
Let $(A,R)$ be an attractor-repeller pair.  If $CH_\bullet(R)=0$ then $CH_\bullet(A)$ and $CH_\bullet(X)$ are isomorphic.   
\end{theorem}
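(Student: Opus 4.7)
The plan is to deduce the theorem directly from the exact sequence of Conley indices for an attractor-repeller pair, namely Proposition \ref{prop:les}. The hypothesis $CH_\bullet(R)=0$ is extremely strong: it forces every term $CH_n(R)$ and $CH_{n+1}(R)$ appearing in the sequence to vanish, which collapses the long exact sequence into a collection of short exact sequences involving only the groups $CH_n(A)$ and $CH_n(X)$.

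More concretely, I would fix an arbitrary degree $n$ and isolate the relevant portion of the sequence from Proposition \ref{prop:les}:
\[
CH_{n+1}(R) \xrightarrow{\partial_{n+1}} CH_n(A) \xrightarrow{i_n} CH_n(X) \xrightarrow{j_n} CH_n(R).
\]
Since $CH_\bullet(R) = 0$, both flanking terms are zero, so the sequence reduces to
\[
0 \longrightarrow CH_n(A) \xrightarrow{i_n} CH_n(X) \longrightarrow 0.
\]
By exactness at $CH_n(A)$, the kernel of $i_n$ equals the image of the zero map, so $i_n$ is injective. By exactness at $CH_n(X)$, the image of $i_n$ equals the kernel of the zero map, so $i_n$ is surjective. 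Hence $i_n$ is an isomorphism, and since $n$ was arbitrary, $CH_\bullet(A) \cong CH_\bullet(X)$ via the maps induced by inclusion.

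There is really no obstacle to surmount here: the result is essentially a mechanical consequence of having the long exact sequence in hand and an entire chain of terms vanishing. The only thing worth stating carefully is that the map realizing the isomorphism is the natural map $i_\bullet$ induced at the level of index pairs by inclusion, which is what the attractor-repeller exact sequence of Proposition \ref{prop:les} encodes. The strength of the statement comes entirely from the structural fact that an attractor-repeller pair gives rise to such a long exact sequence analogous to that of a pair in singular homology; once this is accepted, the conclusion is immediate.
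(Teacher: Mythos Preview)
Your proof is correct and is essentially identical to the paper's own argument: both invoke Proposition~\ref{prop:les}, substitute $CH_n(R)=0$ to collapse the long exact sequence to $0\to CH_n(A)\xrightarrow{i_n} CH_n(X)\to 0$, and read off that $i_n$ is an isomorphism from exactness.
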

\begin{proof}
It follows from Proposition~\ref{prop:les} that there is an exact sequence relating the Conley indices of the attractor-repeller pair.  If $CH_n(R)=0$ for all $n$, this becomes
\[
\cdots \xrightarrow{j_{n+1}} 0 \xrightarrow{\partial_{n+1}} 
CH_n(A)\xrightarrow{i_n} 
CH_n(X)\xrightarrow{j_n} 0 
\xrightarrow{\partial_n} \cdots 
\]
Exactness implies that 1) $0= \img \partial_{n+1} = \ker i_n $, thus $i_n$ is injective and 2) $\img i_n = \ker j_n = CH_n(X)$, so $i_n$ is surjective.  Thus $i_n$ is an isomorphism for all $n$.
\end{proof}

\begin{corollary}\label{cor:neq}
Let $(A,R)$ be an attractor-repeller pair.  If $CH_\bullet(A)\neq CH_\bullet(X)$ then $R\neq\emptyset$. 
\end{corollary}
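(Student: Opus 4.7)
The plan is to prove the corollary by contrapositive, leveraging the two tools already developed for isolated invariant sets: the Wa{\.z}ewski Property (Proposition~\ref{prop:wazewski}) and the index-isomorphism statement of Theorem~\ref{prop:r0}. The key observation is that the contrapositive of the desired implication reads: if $R=\emptyset$, then $CH_\bullet(A)\cong CH_\bullet(X)$. This form is almost immediate from the machinery already in hand, so the corollary really just packages together the two preceding results.

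First, I would assume $R=\emptyset$. Since $R$ is an isolated invariant set (as the dual repeller in an attractor-repeller pair, as noted in the Attractors and repellers subsection), the Wa{\.z}ewski Property applies and yields $CH_\bullet(R)=0$. At this point the hypothesis of Theorem~\ref{prop:r0} is satisfied, so I can invoke it directly to conclude $CH_\bullet(A)\cong CH_\bullet(X)$. Taking the contrapositive gives the stated claim: whenever $CH_\bullet(A)\neq CH_\bullet(X)$, the dual repeller must be nonempty.

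There is essentially no obstacle here; the only subtlety worth flagging is to make explicit that the Wa{\.z}ewski Property is being used in the \emph{forward} direction (empty set $\Rightarrow$ trivial index), rather than the more commonly cited nontriviality-implies-nonempty direction. Once that is noted, the argument is a one-line concatenation of Proposition~\ref{prop:wazewski} and Theorem~\ref{prop:r0}. The real work — establishing the long exact sequence and deducing the isomorphism from vanishing of $CH_\bullet(R)$ — has already been done in Theorem~\ref{prop:r0}, and this corollary simply recasts that isomorphism as a nonemptiness criterion that will be convenient for applications to game dynamics in the impossibility theorem.
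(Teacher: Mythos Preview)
Your proposal is correct and matches the paper's proof in substance: the paper simply applies the contrapositive of Theorem~\ref{prop:r0} to get $CH_\bullet(R)\neq 0$ and then the Wa{\.z}ewski Property to conclude $R\neq\emptyset$, which is logically identical to your argument by contrapositive. The only difference is cosmetic---you take the contrapositive of the corollary and use both ingredients in their forward form, while the paper uses each ingredient's contrapositive directly.
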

\begin{proof}
From Theorem~\ref{prop:r0} we have that $CH_\bullet(R)\neq 0$. Therefore $R\neq \emptyset$ by the Wa{\.z}ewski Property.
\end{proof}


\section{Game dynamics and the Conley index}
For a game $g$, we say $\varphi$ is a dynamical system {\em for} $g$ if $\varphi$ is a dynamical system on $X$, where $X$ is the space of the game $g$ (product of simplices).  Let $NE(g)$ denote the set of Nash equilibria for $g$, $Fix(\varphi)$ the set of fixed points of $\varphi$, and $CR(\varphi)$ the chain recurrent set of $\varphi$.  We highlight three compatibility conditions that $\varphi$ can take with respect to $g$.  
\begin{description}
\item[{\bf Type I}] $NE(g) \subset Fix(\varphi)$,
\item[{\bf Type II}] $NE(g) = Fix(\varphi)$,
\item[{\bf Type III}] $NE(g) = CR(\varphi)$.
\end{description}


\begin{remark}
Types I and II are only assumptions on the fixed points of the dynamics, and leave open the possibility that there are fixed points which are not Nash equilibria, or more generally, that there are recurrent dynamics that do not contain Nash equilibria (e.g., some trajectories cycle and never encounter a Nash equilibrium).  We note, however, that Type III does not assume the Nash equilibria are fixed points, i.e. Type III does not imply Type I or Type II.
\end{remark}

\begin{remark}
Note that we do not require the dynamics to be payoff increasing, uncoupled, etc.  Our only assumptions are that the dynamics are a semi-flow (continuous and deterministic).
\end{remark}




\begin{theorem}[Impossibility Theorem]\label{thm:imposs}
There exists a game $g$ that does not admit Type III dynamics.
\end{theorem}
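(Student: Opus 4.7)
The plan is to exhibit the degenerate $3\times 3$ bimatrix game $g$ of \citet{kohlberg1986strategic} as the witness, and derive a contradiction from the assumption $NE(g)=CR(\varphi)$ by comparing the Conley indices of $NE(g)$ and of the ambient strategy space $X=\Delta^2\times\Delta^2$ via Corollary~\ref{cor:neq}. The decisive structural fact is that for this $g$ the set $NE(g)$ is a topologically nontrivial loop --- a hexagonal piecewise-linear circle --- embedded in the contractible four-dimensional space $X$.

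First I would write down the Kohlberg--Mertens payoff matrices and verify directly from indifference conditions and support restrictions that $NE(g)$ is homeomorphic to $S^1$: the equilibrium set decomposes into six closed edges, each corresponding to a pair of mixed supports, glued cyclically along their boundary profiles. The ambient space $X$ is a product of two $2$-simplices, and hence contractible.

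Next, assume toward contradiction that $\varphi$ is Type III dynamics for $g$, so $CR(\varphi)=NE(g)$. Since $NE(g)$ is connected, it is the unique chain recurrent component, and by the Fundamental Theorem of Dynamical Systems there is a Lyapunov function strictly decreasing off $NE(g)$. Consequently the $\omega$-limit of every trajectory lies in $NE(g)$, which makes $NE(g)$ an attractor with $X$ itself as a trapping neighborhood, and its dual repeller
\[
R=\{x\in X : \omega(x)\cap NE(g)=\emptyset\}
\]
is therefore empty. Now I would compute the two Conley indices. Because $NE(g)$ is an attractor, a valid index pair has $L=\emptyset$ and $N$ a positively invariant neighborhood that deformation retracts onto $NE(g)$, yielding $CH_\bullet(NE(g))\cong H_\bullet(S^1)$, nonzero in degrees $0$ and $1$. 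For $X$ itself (as an isolated invariant set) the pair $(X,\emptyset)$ is admissible, so $CH_\bullet(X)\cong H_\bullet(X)\cong H_\bullet(\text{pt})$, nonzero only in degree $0$. These Conley indices are not isomorphic, so Corollary~\ref{cor:neq} forces $R\neq\emptyset$, contradicting the conclusion that $R=\emptyset$ derived from Type III.

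The principal obstacle is the first step: carefully verifying that the Nash equilibrium set of the Kohlberg--Mertens game is genuinely a topological circle rather than a branched or contractible $1$-complex. This requires a complete case analysis of the best-response cells on both sides of the game and checking that the six equilibrium edges glue cyclically with no extra ramification points or isolated components. Everything else --- contractibility of $\Delta^2\times\Delta^2$, the identification of the Conley index of an attractor with the singular homology of the attractor itself (via a deformation-retracting attractor block), and the exactness argument already packaged in Theorem~\ref{prop:r0} and Corollary~\ref{cor:neq} --- is standard and short.
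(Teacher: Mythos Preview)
Your proposal is correct and follows essentially the same route as the paper: exhibit the Kohlberg--Mertens game, use that $NE(g)\cong S^1$, argue that under Type~III the single chain component $NE(g)$ must be the global attractor with empty dual repeller, and then derive a contradiction from Corollary~\ref{cor:neq} via the mismatch $CH_\bullet(S^1)\not\cong CH_\bullet(X)$. The only cosmetic difference is that you extract the ``$NE(g)$ is the maximal attractor'' step from the Lyapunov function of the Fundamental Theorem, whereas the paper invokes the duality between chain components and the lattice of attractors; both arguments are standard and equivalent here.
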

\begin{proof}
Define $g$ to be the following bimatrix game as in~\citet{kohlberg1986strategic, sorin_benaim}:
\begin{align}\label{eqn:game}
(M_1, M_2) = \begin{pmatrix}
1,1 & 0,-1 & -1,1\\
-1,0 & 0,0 & -1,0\\
1,-1 & 0,-1 & -2,-2
\end{pmatrix}.
\end{align}

By way of contradiction, suppose that there exist Type III dynamics $\varphi$ for $g$, i.e., $NE(g)=CR(\varphi)$.  In the case of this specific $g$, the set of Nash equilibria forms a topological circle (i.e. a homeomorphic copy of $S^1$)~\cite{kohlberg1986strategic}; thus the chain recurrent set $CR(\varphi)$ consists of a single chain recurrent component. As the number of chain recurrent components is finite, there is an explicit duality to the lattice of attractors $\sAtt(\varphi)$ \cite[Theorem 6]{kmv3}. In particular, since there is just a single chain recurrent component, it is an attractor and in fact the unique (maximal) attractor.  That is, $A :=\omega(X) = NE(g)$ is the unique (maximal) attractor.


Choosing $N$ appropriately as an compact neighborhood encompassing the circle of Nash equilibria such that $A=NE(g)\subset \Int(N)$ (see Fig.~\ref{fig:index_pair}), the pair $(N,\emptyset)$ form an index pair for $A$.
Thus the Conley index of $A$ is determined via
\begin{align}\label{CH:A}
CH_i(A) = H_i(N,\emptyset) = H_i(S^1) = \begin{cases} 
      \Z & i = 0,1 \\
      0 & i > 1.
   \end{cases}
\end{align}
 On the other hand, since $X$ is a product of simplices we have that 
 \begin{align}\label{CH:X}
 CH_i(X) = H_i(X,\emptyset) = \begin{cases} 
      \Z & i = 0 \\
      0 & i > 0.
   \end{cases}
\end{align}
As $CH_\bullet(A)\neq CH_\bullet(X)$, it follows from Corollary~\ref{cor:neq} that $R\neq \emptyset$. Thus $A$ cannot be maximal and there cannot exist such a $\varphi$.
\end{proof}

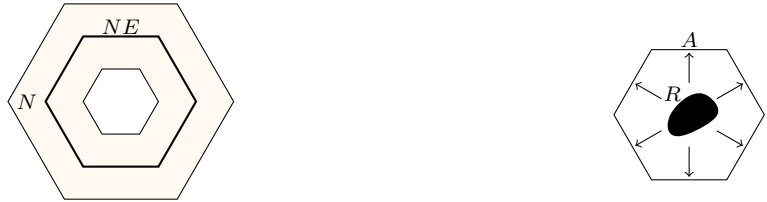
\begin{figure}[h!]
\centering
\begin{minipage}{.45\textwidth}
    \centering
    \begin{tikzpicture}[scale=.5]
    \def\R{1}
    \def\A{2}
    \def\B{3}
    \filldraw[fill=orange!5] (0:\B) \foreach \x in {60,120,...,359} {
        -- (\x:\B)
    }-- cycle ;
    \draw[thick] (0:\A) \foreach \x in {60,120,...,359} {
        -- (\x:\A)
    }-- cycle ;
    \filldraw[fill=white] (0:\R) \foreach \x in {60,120,...,359} {
        -- (\x:\R)
    }-- cycle ;
    \draw (0,2) node{\scriptsize $NE$};
    \draw (-2.5,0) node{\scriptsize $N$};

    \end{tikzpicture}
    \end{minipage}
    \begin{minipage}{.45\textwidth}
    \centering
    \begin{tikzpicture}[scale=.5]
    \def\R{1}
    \def\A{2}
    \def\B{3}
    \draw (0:\A) \foreach \x in {60,120,...,359} {
        -- (\x:\A)
    }-- cycle ;
\fill plot[smooth cycle, tension=.7] coordinates {
    (.75,0) (.5,.5) (0,.5) (-.5,0) (-.5,-.5) (0,-.5)
};
\draw [->] (30:.85) to  (30:1.65);
\draw [->] (90:.85) to  (90:1.65);
\draw [->] (150:.85) to  (150:1.65);
\draw [->] (210:.85) to  (210:1.65);
\draw [->] (-30:.85) to  (-30:1.65);
\draw [->] (-90:.85) to  (-90:1.65);

    \draw (0,2) node{\scriptsize $A$};
    \draw (-.45,.55) node{\scriptsize $R$};

    \end{tikzpicture}
\end{minipage}
    \caption{Cartoon rendering of circle of Nash equilibria ($NE(g)$ lives in 4-dimensions).  Index pair $(N,\emptyset)$  for the circle of Nash equilibria $NE$ [left]. The set of Nash equilibria cannot be a maximal attractor; the dual repeller must be nonempty [right].}
    \label{fig:index_pair}
\end{figure}




\begin{remark}\label{remark:discrete}
As mentioned earlier, an analogous impossibility result may be obtained for discrete time dynamics, i.e., where the dynamics are given by a continuous function $f: X\to X$. The proof is similar to that of Theorem~\ref{thm:imposs}, modified to use the appropriate Conley index theory for discrete time dynamics \cite{mrozek1990leray, franks2000shift, richeson1998connection}. In the discrete time case, for an appropriate notion of index pair $(N,L)$ for an isolated invariant set $S$, there is an induced map $f_{N,L}:H_\bullet(N,L)\to H_\bullet(N,L)$, where $H_\bullet(N,L)$ is a singular homology using field coefficients. Define
\[
CH_\bullet(S) := \bigcap_{n>0} f_{N,L}^n\big(H_\bullet(N,L)\big),
\]
and take $\chi: CH_\bullet(S)\to CH_\bullet(S)$ to be the automorphism induced by $f_{N,L}$. In this case, the Conley index of $S$ is denoted $\Con(S)$ and defined as $\Con_\bullet(S):= (CH_\bullet(S),\chi_\bullet(S))$; this pair is independent of index pair (up to isomorphism) and is an invariant of $S$.  For the game  \eqref{eqn:game}, postulating discrete time dynamics $f$ instead of a semi-flow $\varphi$, we would have $CH_\bullet(A)$ and $CH_\bullet(X)$ as in \eqref{CH:A} and \eqref{CH:X} (where $\Z$ is replaced with a copy of the field), and in addition equipped with automorphisms $\chi_\bullet(A), \chi_\bullet(X)$. However, as $CH_\bullet(A)\neq CH_\bullet(X)$,  the result again follows from the principle that if $\Con_\bullet(A)\neq \Con_\bullet(X)$, then $\Con_\bullet(R)\neq 0$, and thus $R\neq \emptyset$.  This principle is proved with an appropriate long exact sequence of $\Con_\bullet(A), \Con_\bullet(X)$ and $\Con_\bullet(R)$, see \cite[Proposition 3.1]{richeson1998connection}, and is analogous to the proof of Theorem~\ref{prop:r0} and Corollary~\ref{cor:neq}.
\end{remark}

For $g$ given by \eqref{eqn:game}, we have shown that $NE(g)$ cannot be the maximal attractor of a dynamical system.  It may, however, be the case that $NE(g)$ is an attractor for some $\varphi$, with non-empty dual repeller.  In this case, we have $CH_\bullet(A)$ as in \eqref{CH:A}, $CH_\bullet(X)$ as in \eqref{CH:X}, and we may compute $CH_\bullet(R)$ with an elementary homological algebra argument. In fact, $CH_\bullet(R)$ is the Conley index of an unstable fixed point (see Fig. \ref{fig:conley_indices}).
\begin{proposition}\label{prop:conley_example}
If $\varphi$ is a (semi)-flow with attractor-repeller pair $(A,R)$ such that $CH_\bullet(A)$ and $CH_\bullet(X)$ are given by \eqref{CH:A} and \eqref{CH:X} respectively, then
\begin{align}\label{CH:R}
 CH_i(R) = \begin{cases} 
      \Z & i = 2 \\
      0 & i \neq 2.
   \end{cases}
\end{align}
\end{proposition}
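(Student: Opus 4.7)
The plan is to feed the given Conley indices into the long exact sequence of Proposition~\ref{prop:les} and read off $CH_\bullet(R)$ degree by degree. Substituting \eqref{CH:A} and \eqref{CH:X} makes all but a handful of terms vanish, and the sequence collapses to
\[
\cdots \to 0 \to CH_2(R) \to \Z \to 0 \to CH_1(R) \to \Z \xrightarrow{i_0} \Z \to CH_0(R) \to 0,
\]
since $CH_n(A) = CH_n(X) = 0$ for every $n \geq 2$ and $CH_1(X) = 0$.

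The easy degrees come for free. For each $n \geq 3$ the local segment $CH_n(X) \to CH_n(R) \to CH_{n-1}(A)$ reads $0 \to CH_n(R) \to 0$, so exactness gives $CH_n(R) = 0$. For $n = 2$ the segment $0 \to CH_2(R) \to \Z \to 0$ forces $CH_2(R) \cong \Z$, which is the claimed top-degree value.

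The one genuinely non-formal step, and the main (mild) obstacle, is the tail, where $CH_1(R)$ and $CH_0(R)$ both depend on the connecting map $i_0 \colon CH_0(A) \to CH_0(X)$. I would argue $i_0$ is an isomorphism of $\Z$ as follows: an attractor always admits an index pair of the form $(N_A, \emptyset)$ with $N_A$ an attracting neighborhood, so $CH_0(A) = H_0(N_A)$, and the hypothesis $CH_0(A) = \Z$ forces $N_A$ to be connected; similarly $X$ is connected (a product of simplices is contractible). Hence $i_0$ is the map on $H_0$ induced by inclusion of a nonempty connected subspace into a connected space, which sends generating point class to generating point class and is therefore an isomorphism. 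Once this is in hand, exactness of the tail $0 \to CH_1(R) \to \Z \xrightarrow{\sim} \Z \to CH_0(R) \to 0$ forces $CH_1(R) = \ker i_0 = 0$ and $CH_0(R) = \operatorname{coker} i_0 = 0$, yielding \eqref{CH:R}.
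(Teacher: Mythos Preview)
Your argument is correct and is essentially the paper's own proof: both feed \eqref{CH:A} and \eqref{CH:X} into the exact sequence of Proposition~\ref{prop:les}, dispatch $n\geq 3$ and $n=2$ by reading off the zero--$CH_n(R)$--zero and $0\to CH_2(R)\to\Z\to 0$ segments, and handle $n=0,1$ by first arguing that $i_0$ is an isomorphism and then using exactness of the tail. Your justification of the isomorphism via the inclusion of a connected attracting neighborhood into the (contractible) product of simplices is slightly more explicit than the paper's one-line appeal to ``mapping the single connected component of $CH_0(A)$ to $CH_0(X)$'', but the content is the same.
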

\begin{proof}
The Conley indices fit together in an exact sequence:
\[
\cdots \xrightarrow{\partial_{n+1}} CH_n(A)\xrightarrow{i_n} CH_n(X)\xrightarrow{j_n} CH_n(R)\xrightarrow{\partial_n} 
CH_{n-1}(A)\xrightarrow{i_{n-1}} \cdots
\xrightarrow{j_0} CH_0(R)\to 0.
\]
By hypothesis for $i\geq 3$ the sequence takes the form
\[
\cdots \to 0\xrightarrow{j_i} CH_i(R)\xrightarrow{\partial_i} 0\to \cdots
\]
By exactness $0 = \img j_i = \ker \partial_i = CH_i(R)$.   For $i = 2$, the sequence has the form:
\[
\cdots \to 0 \xrightarrow{j_2} CH_2(R) \xrightarrow{\partial_2} \Z \xrightarrow{i_1} 0 \to \cdots
\]
Note that $\partial_2$ is injective ($\ker\partial_2 = \img j_2 = 0$) and surjective $( \img \partial_2 = \ker i_1 = \Z$).  Therefore $\partial_2$ is an isomorphism and $CH_2(R)\cong \Z$.   Finally, the remainder of the sequence is 
\[
\cdots \to 0\xrightarrow{j_1} CH_1(R) \xrightarrow{\partial_1} \mathbb{Z}\xrightarrow{i_0} \mathbb{Z}\xrightarrow{j_0} CH_0(R)\to 0.
\]
The map $i_0$ is an isomorphism, mapping the single connected component of $CH_0(A)$ to $CH_0(X)$.  By exactness, $0=\img j_1 = \ker\partial_1$.  Thus $CH_1(R)\cong \img \partial_1 = \ker i_0 = 0$, where the last equality follows as $i_0$ is an isomorphism. Similarly, $\Z = \img i_0 = \ker j_0$.  Thus $\img j_0 = 0$.  However, $j_0$ is surjective, since by exactness $CH_0(R) = \img j_0$.  Therefore $CH_0(R) = 0$, which completes the proof.
\end{proof}


One might imagine weakening Type III to a Type III$_A$, which instead requires that the set of Nash equilibria $NE(g)$ (taken together with the connecting orbits between) forms an attractor $A$; a Type III game is then necessarily Type III$_A$. We can show an incompatibility result for flows in this case (we expect the result also holds for semi-flows).

\begin{theorem}[Incompatibility Theorem]
There exists a game $g$ such that no flow $\varphi$ for $g$ can be both Type III$_A$ and Type II.
\end{theorem}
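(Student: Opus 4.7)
The plan is to take $g$ to be the same game from equation~\eqref{eqn:game} as in Theorem~\ref{thm:imposs}, whose set of Nash equilibria forms a topological circle, and to argue by contradiction. Suppose there is a flow $\varphi$ on $X$ that is simultaneously Type~III$_A$ and Type~II. Let $(A,R)$ be the attractor-repeller pair with $NE(g)\subset A$, where $A$ is the attractor formed by $NE(g)$ together with the connecting orbits between its points.

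First, I would compute the Conley index of $A$ and show it has exactly the form~\eqref{CH:A} that appeared in the proof of Theorem~\ref{thm:imposs}. The attractor $A$ is a compact invariant set consisting of the circle $NE(g)\cong S^1$ together with heteroclinic arcs between fixed points lying on that circle; an isolating neighborhood $N$ can be chosen as a tubular neighborhood of $NE(g)$ that deformation retracts onto $S^1$, and since $A$ is an attractor the pair $(N,\emptyset)$ is an index pair. Hence $CH_i(A)=H_i(S^1)$. Since $X$ is a product of simplices, $CH_\bullet(X)$ is given by~\eqref{CH:X}, so $CH_\bullet(A)\neq CH_\bullet(X)$ and Corollary~\ref{cor:neq} yields $R\neq\emptyset$.

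Next, since the hypotheses of Proposition~\ref{prop:conley_example} are satisfied, I would apply it to conclude that $CH_i(R)=\Z$ when $i=2$ and vanishes otherwise, exactly as in~\eqref{CH:R}. Because $\varphi$ is a flow (not merely a semi-flow) and the Conley index of the isolated invariant set $R$ is concentrated in a single degree with value $\Z$, the fixed point theorem for Conley indices (Proposition~\ref{prop:fp_conley}) applies and produces a fixed point $p\in R$. Since the attractor and its dual repeller are disjoint, $p\notin A$, and therefore $p\in Fix(\varphi)\setminus NE(g)$. This contradicts the Type~II assumption $Fix(\varphi)=NE(g)$, completing the argument.

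The main obstacle I expect to encounter is the justification in the first step that $CH_\bullet(A)$ really equals $H_\bullet(S^1)$. Unlike in Theorem~\ref{thm:imposs}, where $A$ was identified with the maximal attractor $\omega(X)=NE(g)$ via the lattice duality from~\cite{kmv3}, here $A$ is only required to contain $NE(g)$ and its connecting orbits, so $A$ could in principle be strictly larger and topologically richer than the circle itself. One needs to argue that heteroclinic orbits between points already on $NE(g)$ attach in a way that does not alter the homology, so that an index pair of $A$ still has the homotopy type of $S^1$. The rest of the argument is then a clean combination of the attractor-repeller exact sequence with the flow-specific fixed point theorem for Conley indices, which is precisely why the statement is restricted to flows and why the authors note that the semi-flow case remains open.
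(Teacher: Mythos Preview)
Your proposal is correct and takes essentially the same approach as the paper: use the Kohlberg--Mertens game of \eqref{eqn:game}, compute $CH_\bullet(A)$ and $CH_\bullet(X)$ as in \eqref{CH:A} and \eqref{CH:X}, apply Proposition~\ref{prop:conley_example} to obtain $CH_\bullet(R)$ as in \eqref{CH:R}, and then invoke Proposition~\ref{prop:fp_conley} to produce a fixed point in $R$, yielding $NE(g)\subsetneq Fix(\varphi)$ and contradicting Type~II. The obstacle you anticipate---that connecting orbits might alter the homotopy type of $A$---is one the paper's own proof does not address explicitly either; it simply cites ``the same argument as in the proof of Theorem~\ref{thm:imposs}.''
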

\begin{proof}
We consider again the game $g$ given in \eqref{eqn:game}.  If $\varphi$ is Type III$_A$, then the same argument as in the proof of Theorem~\ref{thm:imposs} shows that $\varphi$ has an attractor-repeller pair $(A,R)$ such that $CH_\bullet(A)$, and $CH_\bullet(X)$ are given by \eqref{CH:A} and \eqref{CH:X}, and it follows from Proposition~\ref{prop:conley_example} that $CH_\bullet(R)$ is given by \eqref{CH:R}.   It follows from Proposition \ref{prop:fp_conley} that $R$ contains a fixed point.  Thus $NE(g)\subsetneq Fix(\varphi)$, and $\varphi$ cannot be a Type II flow.
\end{proof}

\section{Nondegenerate games and approximate equilibria}

\subsection{Nondegenerate games}

Our impossibility result in the previous section is constructed around a degenerate normal-form game with a continuum of equilibria.  
What if the game is nondegenerate?

\begin{theorem}\label{thm:nondeg_dynamics}
For any nondegenerate game $g$ there is a Type III dynamical system $\varphi_g$.
\end{theorem}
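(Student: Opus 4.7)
The plan is to build an explicit gradient-like semi-flow $\varphi_g$ whose chain recurrent set coincides with $NE(g)$; the construction is algorithmic but will rely on an a priori enumeration of all Nash equilibria of $g$, which is what forces the running time to be exponential in the worst case.

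First, I exploit nondegeneracy: $NE(g)=\{x^{*}_{1},\ldots,x^{*}_{N}\}$ is a finite set of isolated points, and the classical Shapley--Wilson index theorem assigns to each $x^{*}_{i}$ a Nash index $\iota_i\in\{+1,-1\}$ with $\sum_i \iota_i = 1$. The phase space $X=\prod_k\Delta_{k}$ is homeomorphic to a closed ball, so $\chi(X)=1$, and the Nash indices therefore satisfy the Poincar\'e--Hopf compatibility $\sum_i \iota_i = \chi(X)$. This topological balance is the key ingredient that makes a clean construction possible.

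Next, after locating the $x^{*}_{i}$ (which is the step that costs exponential time), I pick pairwise disjoint closed balls $\overline{U_i}\subset \Int(X)$ around each equilibrium and place on each $U_i$ a standard local model vector field: a radial contraction when $\iota_i=+1$, and a hyperbolic saddle realising index $-1$ when $\iota_i=-1$. Each local model carries a strict local Lyapunov function $V_i$ (e.g., a quadratic form in local coordinates). I then invoke a standard piece of differential topology---a handle-decomposition / Morse-theoretic argument for the ball in the spirit of Milnor and Smale: because the prescribed local indices sum to $\chi(X)$ and $X$ is a ball, the local models extend to a nowhere-vanishing smooth vector field $F$ on $X\setminus\bigcup_i \Int(U_i)$ that points inward on $\partial X$, and the extension can be chosen so that the $V_i$ amalgamate into a single strict Lyapunov function $V\colon X\to\mathbb{R}$ that decreases strictly along every nonstationary orbit of $F$.

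Finally, let $\varphi_g$ be the semi-flow generated by $F$. By construction $Fix(\varphi_g)=NE(g)$, and the existence of the global strict Lyapunov function $V$ forces $\varphi_g$ to be gradient-like. From Conley's Fundamental Theorem it then follows immediately that $CR(\varphi_g)=Fix(\varphi_g)=NE(g)$, so $\varphi_g$ is Type III for $g$. The main obstacle is exactly the global extension step: building $F$ and $V$ so that no extra zeros, periodic orbits, or heteroclinic cycles sneak in outside the $U_i$. The Poincar\'e--Hopf identity $\sum_i \iota_i = \chi(X)=1$ is precisely what rules out forced extra zeros, and the contractibility of $X$ is what lets one amalgamate the local Lyapunov data into a single global one; if either ingredient failed, this strategy would collapse, mirroring exactly the Conley-index obstruction of Theorem~\ref{thm:imposs} for the degenerate game.
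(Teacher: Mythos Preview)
The crucial step in your argument---the global extension---is asserted rather than proved. Poincar\'e--Hopf (more precisely, Hopf's converse) does guarantee a vector field on the ball with exactly the prescribed zeros and inward boundary behaviour, but it does \emph{not} hand you a gradient-like one: nothing in that theorem prevents periodic orbits or more complicated recurrence from appearing in $X\setminus\bigcup_i U_i$. Saying ``the extension can be chosen so that the $V_i$ amalgamate into a single strict Lyapunov function $V$'' is precisely the content you are trying to establish, and you have not supplied the mechanism. A correct version of your route would instead build a Morse function on the ball with $p$ minima and $q$ index-$1$ saddles at the prescribed locations (via births of cancelling $0$--$1$ pairs followed by an ambient isotopy) and then take its negative gradient; that is doable, but it is real work you have not done, and the manifold-with-corners structure of $X$ adds further friction you have not addressed.

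The paper sidesteps all of this with a one-line construction: pick a \emph{single} Nash equilibrium $y$ (not the full enumeration) and set the vector field to be ``head straight toward $y$ with speed $c\cdot D_g(x)$'', where $D_g$ is the total best-response deficit. The zeros of this field are exactly the zeros of $D_g$, i.e.\ $NE(g)$, and $|x-y|$ is a strict Lyapunov function off the fixed set because every nontrivial orbit is a segment of the ray toward $y$. No index theory, no Morse surgery, no extension problem; the star-like geometry of the flow makes the absence of extra chain recurrence immediate. Your route, if completed, would yield hyperbolic local models carrying the correct Nash indices---aesthetically closer to the Conley-index narrative elsewhere in the paper---but at the cost of substantial unforced technical overhead and, as written, a genuine gap.
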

\begin{proof}
Since $g$ is nondegenerate, it has an odd number of isolated Nash equilibria \citep{Shapley1974}. Fix one such equilibrium and call it $y$.  We next define $\varphi_g$ in terms of $y$. We shall define it at point $x$ implicitly, in terms of the direction of motion, and the speed of motion; if this is done, $\varphi_g(t,x)$ is easily computed through integration on $t$.  The direction of motion is the unit vector of $y-x$: the dynamics heads to $y$.  The speed of motion is defined to be $c\cdot D_g(x)$, where $c>0$ is a constant, and by $D_g(x)$ we denote the {\em deficit at $x$:} the sum, over all players, of the difference between the best-response utility at $x$, and the actual utility at $x$.  It is clear that $D_g(x)\geq 0$, and it becomes zero precisely at any Nash equilibrium.

Now it is easy to check that $\varphi_g$ is a well defined dynamics. Furthermore, since the underlying flow is acyclic in a very strong star-like sense, its chain recurrent set coincides with the set of its fixed points, which coincides with the set of Nash equilibria of $g$ --- since there is no opportunity to close extra cycles by $\epsilon$ jumps --- completing the proof.
\end{proof}

Now note that the algorithm for specifying $\varphi_g$ requires finding $y$, a PPAD-complete (and FIXP-complete for more than two players) problem. We believe that the dependence is inherent:

\begin{conjecture}
The computational task of finding, from  the description of a game $g$, either a degeneracy of $g$ or an algorithm producing the direction of motion and speed of a dynamical system of Type III is PPAD-hard (and FIXP-hard for three or more players).
\end{conjecture}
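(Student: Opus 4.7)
The plan is to establish PPAD-hardness (and FIXP-hardness for three or more players) via a reduction from computing a Nash equilibrium of a nondegenerate game, which is PPAD-complete by \citet{DGP,CDT} and FIXP-complete by \citet{EY}. Given an input game $g$, invoke the hypothetical polynomial-time algorithm $A$. If $A$ certifies degeneracy, apply an inverse-polynomial generic perturbation of $g$ (which preserves hardness of the Nash problem) and rerun; otherwise $A$ returns a polynomial-size arithmetic circuit computing the direction/speed of a Type III dynamics $\varphi_g$, for which $CR(\varphi_g)=NE(g)$ is a finite set of isolated points.

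The core step is to extract one element of $NE(g)$ from the circuit in polynomial time. I would first argue that in the nondegenerate case at least one point of $CR(\varphi_g)=NE(g)$ is an asymptotically stable attractor, using the partial order on chain-recurrent components together with the lattice duality between $\sAtt(\varphi_g)$ and $\sRep(\varphi_g)$: any minimal chain-recurrent component must be attracting. Starting from an arbitrary interior point of $X$, I would then simulate a discretized Euler trajectory by querying the circuit as an oracle for the direction and speed, tracking monotone descent of a Lyapunov function constructed as in the Fundamental Theorem of Dynamical Systems. Once the trajectory enters a sufficiently small neighborhood of a Nash point, a rounding step in the spirit of \citet{EY} recovers an exact equilibrium, closing the reduction.

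The main obstacle is quantitative: Type III is a purely topological hypothesis, and the Fundamental Theorem provides no rate information. For the reduction to run in polynomial time I must show that polynomial-time computability of the direction field \emph{forces} polynomial Lyapunov descent, polynomially large basins of attraction, and polynomial local contraction near each Nash attractor. I expect this to require a genuinely new complexity-theoretic treatment of dynamics. A promising avenue is to combine computable analysis with the Conley-index rigidity underlying \Cref{thm:imposs}: intuitively, a dynamics whose chain recurrent set coincides with $NE(g)$ must encode the combinatorial structure of $NE(g)$, so a superpolynomially slow approach to equilibrium should force the describing circuit to be superpolynomially large. Ruling out such pathological but efficiently describable dynamics is, in my view, precisely the hard part of the conjecture, and is the step for which the standard toolkit of PPAD reductions appears insufficient.
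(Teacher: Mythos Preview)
The statement you are attempting to prove is presented in the paper as an open \emph{conjecture}, not a theorem; the authors give no proof and explicitly write that its resolution ``is likely to require the development of new complexity-theoretic techniques pertinent to dynamical systems.'' So there is no paper proof to compare against, and your proposal should be read as a research plan rather than a proof.

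Your plan follows the natural reduction the conjecture invites, and you are right to identify the lack of any quantitative convergence rate as the crux. I would flag two further gaps beyond the one you already isolate. First, the ``perturb and rerun'' step is not innocuous: the conjectured task allows the algorithm to output a degeneracy certificate, and nothing prevents it from doing so on every perturbation you try; a generic perturbation is nondegenerate only almost surely, so at best you obtain a randomized reduction, and you must also argue that an (approximate) equilibrium of the perturbed game yields the required object for the original instance at the right precision. Second, even granting that some chain-recurrent component is a minimal attractor, ``asymptotic stability'' does not follow without more, and in any case the Lyapunov function supplied by Conley's theorem is not computable from the circuit for the vector field; you cannot track its descent during simulation, nor bound the basin of attraction of the stable Nash point you hope to hit from an arbitrary interior start. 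The rounding step ``in the spirit of \citet{EY}'' is also doing a lot of work: for three or more players exact equilibria need not be rational, and FIXP-hardness concerns a different computational model than producing a point; you would need to specify carefully what output you extract and why it suffices.

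In short, your outline is the obvious line of attack and your self-diagnosis of the main obstacle matches the paper's own assessment, but the proposal as written is a sketch of where the difficulties lie rather than a proof, and the paper offers nothing further.
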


We believe this is an important open question in the boundary between computation, game theory, and the topology of dynamical systems, whose resolution is likely to require the development of new complexity-theoretic techniques pertinent to dynamical systems.

\subsection{$\epsilon$-approximate Nash equilibria}

Next, it may seem plausible that the difficulty of designing dynamics that converge to Nash equilibria can be overcome when only $\epsilon$-approximation is sought, for some $\epsilon>0$.  Recall that an $\epsilon$-Nash equilibrium is a mixed strategy profile in which all players' utilities are within an additive $\epsilon>0$ of their respective best response. We go on to show that our impossibility theorem extends to this case as well. Let us denote by $NE_\epsilon(g)$ the set of $\epsilon$-approximate Nash equilibria of a game $g$. Finally, let us call the dynamics $\varphi$ for a game $g$ to be of Type III$_\epsilon$ if $CR(\varphi)=NE_{\epsilon}(g)$.

\begin{theorem}
There is a game $g$ which admits no Type III$_\epsilon$ dynamics.
In fact, the set of games which admit no Type III$_\epsilon$ dynamics has positive measure.
\end{theorem}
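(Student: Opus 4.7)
My strategy is to reuse the same game $g$ from \eqref{eqn:game} and imitate the Conley-index argument of Theorem~\ref{thm:imposs} verbatim, with the ``circle of Nash equilibria'' $NE(g)$ replaced by the thickened set $NE_\epsilon(g)$. The whole plan rests on one homotopical observation: for all sufficiently small $\epsilon>0$ the set $NE_\epsilon(g)$ is a closed neighborhood of $NE(g)$ in $X=\Delta^2\times\Delta^2$ that deformation retracts onto $NE(g)$, so in particular $NE_\epsilon(g)$ is connected and $H_\bullet(NE_\epsilon(g))\cong H_\bullet(S^1)$.

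\textbf{Step 1 (homotopy type of $NE_\epsilon(g)$).} First I would write $NE_\epsilon(g)$ explicitly as the solution set of the piecewise-linear system
\[
\max_{i}(M_1 y)_i - \langle x, M_1 y\rangle \le \epsilon,\qquad \max_{j}(x^\top M_2)_j - \langle x, M_2 y\rangle \le \epsilon,
\]
together with the simplex constraints. This displays $NE_\epsilon(g)$ as a closed semi-algebraic (in fact piecewise-linear) neighborhood of $NE(g)$. Since the Kohlberg--Mertens game has $NE(g)$ equal to an isolated six-sided closed walk in $X$ (see~\citet{kohlberg1986strategic}), and since this walk is ``robust'' in the sense that the linear constraints defining it are strictly slack transverse to the circle, one can show that for all $\epsilon$ below some explicit threshold $\epsilon_0>0$, the deficit function $D_g(x,y)$ of Theorem~\ref{thm:nondeg_dynamics} is a continuous proper function whose sublevel set $\{D_g\le \epsilon\}=NE_\epsilon(g)$ deformation retracts onto $\{D_g=0\}=NE(g)\cong S^1$ via the negative-gradient-like flow transverse to the circle. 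This gives the homotopy equivalence $NE_\epsilon(g)\simeq S^1$.

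\textbf{Step 2 (Conley-index contradiction).} Suppose $\varphi$ is Type III$_\epsilon$ for $g$, i.e.\ $CR(\varphi)=NE_\epsilon(g)$. By Step~1 this set is connected, so it is a single chain-recurrent component and therefore (as in the proof of Theorem~\ref{thm:imposs}, via \cite[Theorem 6]{kmv3}) the unique maximal attractor $A=\omega(X)=NE_\epsilon(g)$. Choosing $N$ to be a compact neighborhood that retracts onto $NE_\epsilon(g)$ (one may take $N=NE_{\epsilon'}(g)$ for slightly larger $\epsilon'<\epsilon_0$), the pair $(N,\emptyset)$ is an index pair for $A$, whence
\[
CH_\bullet(A)\cong H_\bullet(S^1),\qquad CH_\bullet(X)\cong H_\bullet(\mathrm{pt}).
\]
Since $CH_\bullet(A)\not\cong CH_\bullet(X)$, Corollary~\ref{cor:neq} forces $R\ne\emptyset$, contradicting the maximality of $A$.

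\textbf{Step 3 (positive measure).} Identify the space of $3\times 3$ bimatrix games with $\R^{18}$ and let $B_\delta(g)$ be an open ball of radius $\delta$ around $g$. The map sending a game $g'$ to its deficit function $D_{g'}$ is continuous in the $C^0$ (in fact linear) sense, uniformly on $X$. Hence for $\delta$ sufficiently small (depending on $\epsilon$ and $\epsilon_0$), one still has $NE_\epsilon(g')\subset NE_{\epsilon_0}(g)$ and $NE(g)\subset NE_\epsilon(g')$, so the straight-line homotopy between the defining inequalities of $NE_\epsilon(g)$ and $NE_\epsilon(g')$, composed with the retraction of Step~1, exhibits $NE_\epsilon(g')\simeq S^1$. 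Steps~1--2 then apply to every $g'\in B_\delta(g)$, and $B_\delta(g)$ has positive Lebesgue measure in the space of games.

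\textbf{Main obstacle.} The nontrivial ingredient is Step~1: verifying that the piecewise-linear thickening $NE_\epsilon(g)$ really is an annular neighborhood of the circle $NE(g)$ rather than, say, a disk that fills in the hole. This requires a local analysis of the deficit function near $NE(g)$ showing that $D_g$ grows strictly in the directions normal to the circle inside $X$, uniformly along the circle. Once this normal-growth (``transverse non-degeneracy'') property of the specific Kohlberg--Mertens game is established, both the homotopy-equivalence in Step~1 and the perturbation stability in Step~3 are quantitative consequences, and the numerical bound $\epsilon\le 0.09$ referenced in the introduction would come from making the implicit constant in this normal-growth estimate explicit.
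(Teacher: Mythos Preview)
Your overall architecture matches the paper's proof exactly: same Kohlberg--Mertens game, same reduction to ``$NE_\epsilon(g)\simeq S^1$'' followed by the Conley-index contradiction via Corollary~\ref{cor:neq}, and the same sandwich $NE(g)\subset NE_\epsilon(g')\subset NE_{\epsilon_0}(g)$ for the positive-measure claim. The only substantive difference is the technique for Step~1. You propose a Morse-theoretic retraction along (a surrogate for) the gradient of the deficit $D_g$, with the transverse non-degeneracy of $D_g$ along the hexagonal circle as the analytic input. The paper instead partitions $NE_\epsilon(g)$ into the nine best-response regions $P^{ij}_\epsilon$ and then exploits a structural feature of this particular game---strategy~1 weakly dominates strategies~2 and~3 for both players---to collapse the whole set into the single piecewise-linear region $P^{11}_\epsilon$, within which the homotopy to $NE(g)$ is argued (with the explicit threshold $\epsilon\approx 0.09$ established numerically). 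The paper's route is a shortcut tailored to the Kohlberg--Mertens payoffs; your route is more portable to other games but leaves the normal-growth estimate to be verified by hand.

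One small slip: $\{D_g\le\epsilon\}\neq NE_\epsilon(g)$, since the latter bounds \emph{each} player's regret by $\epsilon$ whereas $D_g$ is their sum. Use instead the sublevel set of the maximum of the two player deficits, or simply note $\{D_g\le\epsilon\}\subset NE_\epsilon(g)\subset\{D_g\le 2\epsilon\}$, which is enough for the retraction argument.
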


\begin{proof}
Consider again the Kohlberg-Mertens game \eqref{eqn:game}.
We claim that its set of $\epsilon$-Nash equilibria $NE_{\epsilon}(g)$ is homotopy equivalent to $S^1$ (a circle) for
sufficiently small $\epsilon > 0$.  To prove the claim, we subdivide the set of all strategy profiles into nine polytopes $P^{ij}\ $ for all $ i,j \in\{1,2,3\}$, where the polytope $P^{ij}$ is the subset of the strategy space such that the best response of player 1 is $i$, and that of player 2 is $j$. Obviously, these regions can be defined by linear inequalities.  Let $P^{ij}_{\epsilon}$ denote the intersection of $P^{ij}$ and $NE_{\epsilon}(g)$; it can be defined by the linear inequalities defining $P^{ij}$ plus the inequality stating that $(x,y)$ is in $NE_\epsilon(g)$:
\[ x^T M_1 y \geq (M_1 y)_i - \epsilon, \]
and a similar inequality for player 2.
\footnote{As an example, for $P^{22}_{\epsilon}$ and $\epsilon=0.27$, the solution of these inequalities is of the form
\[
x_1=0\land \left(\left(0\leq x_2<0.73\land y_1=0\land \frac{x_2-0.73}{x_2-1}\leq y_2\leq 1\right)\lor \left(0.73\leq x_2\leq 1\land y_1=0\land 0\leq y_2\leq 1\right)\right) .
\]}
Now it is clear that $NE_{\epsilon}(g)$ is the union of these nine manifolds.
However, it is known for the Kohlberg-Mertens game  that strategies 2 and 3 are weakly dominated by the first strategy for the first player (and by symmetry, also for the second player) \citep{DemichelisRitzberger}.\footnote{This can straightforwardly be observed by verifying that the following system of inequalities is tautological:
\[
\begin{cases}
2y_1 + y_2 - 1 \geq y_2 - 1 \\
2y_1 + y_2 - 1 \geq 3y_1 + 2y_2 - 2.
\end{cases}
\]}
Hence, all manifolds $P^{ij}_\epsilon$ are contained within $P^{11}_\epsilon$. Thus $NE_\epsilon(g)$ is a connected, compact $4$-dimensional manifold (with boundary) which is homotopy equivalent to $NE(g)$ for a sufficiently small $\epsilon$.\footnote{Fig. \ref{fig:approx_projection} shows a projection (in a particular direction of $\R^4$) of $NE_\epsilon(g)$ for $\epsilon = 0.09$.  See the supplementary material for a video of {\em 3-dimensional slices} of $NE_\epsilon(g)$. 
Computations performed using Mathematica show that $NE_\epsilon(g)$ is homotopy equivalent to $S^1$ up to at least $\epsilon = 0.09$, and it becomes homotopy equivalent to a ball for some $\epsilon \in (0.09, 0.12)$.}
Assuming $NE_\epsilon(g)$ is an attractor for a dynamical system $\varphi$, we may reason similarly to the proof of Theorem~\ref{thm:imposs} and again invoke Corollary~\ref{cor:neq} to show that there cannot exist Type III$_\epsilon$ dynamics for $g$.

To show that the set of such games which admit no Type III$_\epsilon$ dynamics has positive measure,
consider the approximation problem for an $\epsilon$ substantially smaller than the limit near $0.09$ --- say $\epsilon = 0.03$ --- and all perturbations of the same game where each utility value of the normal form of $g$ in \eqref{eqn:game} is perturbed independently, so that the 18-dimensional vector of perturbations has norm $\frac{\epsilon}{c}$, for some appropriately large constant $c$.  It is clear that this set of games has positive measure.  Let us consider a game $g'$ in this ball. First, it is self-evident that any equilibrium of $g$ is an $\epsilon$-equilibrium of $g'$, hence the set of $NE_{\epsilon}(g')$ contains $NE(g)$.
Furthermore, it is also clear that any strategy profile that is {\em not} a $0.09$-equilibrium of $g$
is not in $NE_{\epsilon}(g')$. 
Thus $NE_\epsilon(g')$ is contained in $NE_{0.09}(g)$.  It follows that $NE_\epsilon(g')$ is homotopy equivalent to $S^1$, and thus the argument above holds for $g'$, which completes the proof.
\end{proof}

\begin{figure}[ht]
    \centering
    \includegraphics[scale=0.4]{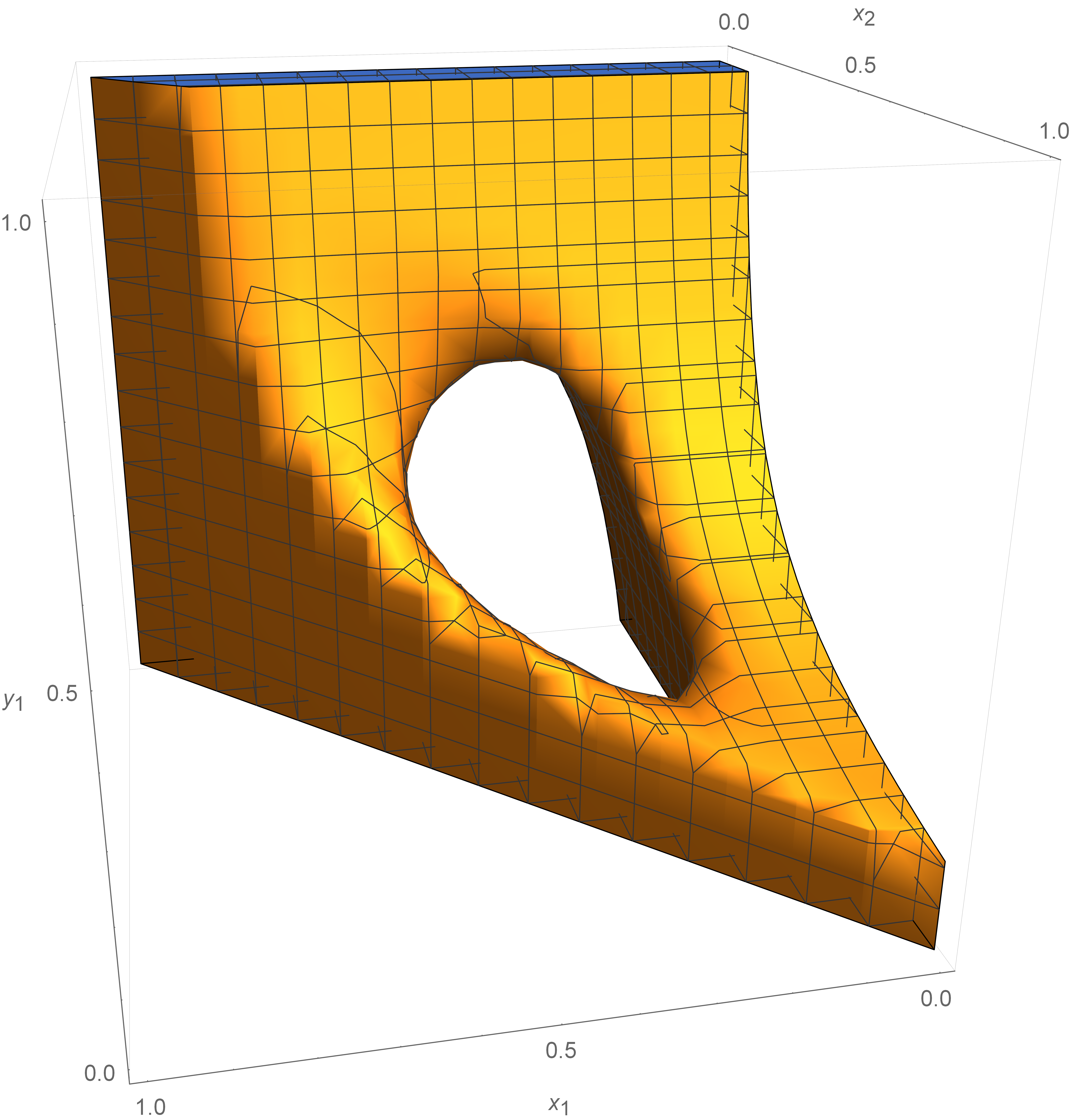}
    \caption{3-dimensional projection of $NE_\epsilon(g)$ when $g$ is the Kohlberg-Mertens game \eqref{eqn:game}, and for utility-normalized $\epsilon=0.09$. The object depicted is homotopy equivalent to $S^1$ (a circle).}
    \label{fig:approx_projection}
\end{figure}

\begin{remark}
A generic perturbation of the Kohlberg-Mertens game has a finite set of isolated Nash equilibria (and is nondegenerate), and we know from Theorem~\ref{thm:nondeg_dynamics} that Type III dynamics do exist for this perturbed game. It may, therefore, appear surprising that we can prove a stronger impossibility result (positive measure of such games) despite the goal being more modest (just approximation instead of an exact equilibrium). The reason is that as soon as the sought approximation $\epsilon$ becomes much larger than the perturbation of the game (equivalently, the perturbation of the game being much smaller than the approximation), and it is required that all approximate equilibria be the only chain recurrent points of the dynamics, $NE_\epsilon(g)$ is once again is homotopy equivalent to $S^1$, and our characterization of the Conley indices (Corollary~\ref{cor:neq}) once again applies.
\end{remark}

\section{Conclusion}

In this paper we have argued that the notion of Nash equilibria is fundamentally incomplete for describing the global dynamics of games.  More precisely, we have shown that there are games where the set of Nash equilibria does not comprise the entire chain recurrent set, and thus the Nash equilibria cannot account for all of the long-term dynamical behavior.  Moreover, this is true even when one relaxes the focus from Nash equilibria to approximate Nash equilibria.

We have utilized the chain recurrent set in this paper in order to characterize game dynamics. However, ultimately we believe that it is not the right tool for the analysis of game dynamics.  The chain recurrent set is a brittle description of the fine structure of dynamics, i.e. not robust under perturbations.  In contrast, the rules which are suggested to govern players' behavior are not meant as precise models derived from first principles, but instead as rough approximations.  Thus the appropriate mathematical objects for analysis for game dynamics must be robust to perturbation; in the words of~\citep{conley1978isolated}, \emph{"...if such rough equations are to be of use it is necessary to study them in rough terms"}. 
Instead, we propose a focus on the coarser concept of \emph{Morse decomposition} (if a system has a finest Morse decomposition, then it is the chain recurrent set).  Morse decompositions are posets of isolated invariant sets which possess a duality theory with lattices of attractors, and in addition have an associated homological theory using the Conley index \citep{conley1978isolated}. Ultimately, these ideas culminate in theory of the connection matrix, which describes the global dynamics by means of a chain complex, and which would provide a robust, homological theory of the global dynamics of games.

The intersection of these ideas with the solution concepts of online learning and optimization (e.g., regret) as well as that of game theory (e.g., Nash/(coarse) correlated equilibria) holds the promise of a significantly more precise understanding of learning in games. 
Although these future theories are yet to be devised, one of their aspects is certain: Nash equilibria are not enough.

\section*{Acknowledgements} 

The work of K.S. was partially supported by EPSRC grant EP/R018472/1. K.S. would like to thank Rob Vandervorst for numerous enlightening discussions regarding Conley theory.

\bibliography{refs,refs2}

\appendix

\section{A primer on algebraic topology}

\subsection{Introductory remarks and definitions}

The prime idea of algebraic topology is to use algebraic invariants -- which may not just be numbers, but can also be sets, groups, vector spaces, and so on -- to structurally distinguish between topological spaces, i.e., so that we may conclude when ``two spaces are different.'' The idea of an invariant is that even though we may apply some reasonable transformation of a space to another (see below, e.g., homeomorphism, or homotopy), the transformed space is expected to retain certain properties/groups of the original one. 

First, a {\em topological space} is a set of points along with a ``structure'' on it, called topology on the set, which is formally defined by the acceptable neighborhoods of any point. The neighborhoods of a point allow us to define the closeness of points, and furthermore, continuity and limits. We will not elaborate on the definitions of those (which belong to the branch of point set topology), since in our case, we will only have to deal with subsets of metric spaces, where we have a well-defined norm in the usual sense.

Recall the definition of a {\em quotient set}, i.e., the set $S/\sim$ of all the equivalences classes of a binary relation $\sim$ under $S$. Commonly, for a subset $A\subseteq S$, $S/A$ is the quotient set defined by the natural relation that identifies all points within $A$ as equivalent to one another, and no other point in $S\setminus A$ as equivalent to any other point. Then, the {\em quotient space} $X/\sim$ (sometimes also referred to as $X$ modulo the equivalence relation $\sim$) for a topological space $X$ is defined as the corresponding quotient set, endowed with the finest topology that unifies the neighborhoods of equivalent points. For example, if we take the line $[0,1]$ {\em modulo} the relation that identifies its boundary (which are the two points, 0 and 1), then the resulting space is ``equivalent'' to a unit circle, denoted as $S^1$. What is our notion of ``equivalence'' here? It is the standard one, called {\em homeomorphism} in topology.

\begin{definition}
\label{def:homeo}
Two topological spaces $X, Y$ are {\em homeomorphic} (denoted by $X \cong Y$) if there exist continuous maps $f:X\to Y, g:Y\to X$ such that $f\circ g = id_Y$ and $g\circ f = id_X$, where $id_X: X\to X, id_Y: Y\to Y$ are identity maps.
\end{definition}

Often the notion of homeomorphism is too strong. This is why a new, relaxed notion based upon the {\em continuous deformation} of a space to another is necessary, called {\em homotopy equivalence} of spaces. Before we give this definition, we start with the definition of the notion of deforming a function into another function:

\begin{definition}
Let $X, Y$ be two topological spaces. We say that two continuous maps $f, g: X \to Y$ are {\em homotopic} (denoted by $f\sim g$) if there exists a map $H: X\times [0,1] \to Y$ such that $H(x, 0) = f(x)$ for all $x\in X$ and $H(x, 1) = g(x)$ for all $x\in X$.
\end{definition}

One important observation in this definition is that it is not allowed for the continuous map $H$ to ``exit'' $Y$, but it must constantly remain within it, i.e., $H(x,t)\in Y$ for all $x\in X, t\in [0,1]$.

\begin{definition}
Two topological spaces $X, Y$ are {\em homotopy equivalent} (denoted by $X \simeq Y$) if there exist continuous maps $f:X\to Y, g:Y\to X$ such that $f\circ g \sim id_Y$ and $g\circ f \sim id_X$, where $id_X: X\to X, id_Y: Y\to Y$ are identity maps.
\end{definition}

Note that the difference with the homeomorphism is that here the functions are only required to be able to be ``continuously deformed'' into the identities, and not necessarily be exactly equal to them.

Finally, we review some notions from standard group theory. First, the definition of when two groups are {\em isomorphic} closely follows \Cref{def:homeo}. A {\em group homomorphism} is a function $h: G \to H$ for two groups $(G, \star)$ and $(H, \cdot)$ that preserves the group structure, i.e., $h(u\star v) = h(u) \cdot h(v)$ for all $u,v\in G$. The {\em kernel} $\ker h$ of a group homomorphism $h$ is the set $\{ u\in G : h(u) = e_H\}$ where $e_H$ is the identity element of $H$. The {\em image} $\img h$ of a group homomorphism $h$ is the set $\{ h(u) : u\in G\}$.  Whenever we refer to abelian groups, i.e., groups with the commutative property $a + b = b + a$, we usually utilize the addition notation, i.e., the group operation is $+$, the identity element is $0$, and the inverse of an element $x$ is denoted by $-x$. In this case, we also denote by 0 the {\em trivial group}, i.e., the group that consists of only the identity element.

The {\em quotient group} $G/H$ when $H$ is a subgroup of an abelian group $G$ is defined as the set of all cosets of $H$ in $G$, along with the natural binary operation of the group extended into cosets. Finally, the {\em free abelian group with a certain basis} is the set of all \textit{integer} linear combinations of elements of that basis, alongside a natural operation that combines these. For example, if the basis is $\{\sigma_1,\sigma_2,\dots,\sigma_n\}$, then each element of the free abelian group with that basis can be written in the form $\sum\limits_{i=1}^n n_i \sigma_i$ for some $n_i\in\Z$, where the operation of the group is addition ($+$).

\subsection{Singular homology}
\label{app:singular}

We move on to briefly review singular homology. A standard reference is given by \citet{hatcher}. Intuitively, the reason to develop the theory through the dependence of a topological space on simplices is that for simplices we have an understanding of how to define notions such as ``cycles'' and ``holes.''

Denote by $\Delta^n$ the standard $n$-simplex, spanned by the unit vertices $v_0, v_1, \ldots, v_n$. The boundary of the $n$-simplex consists of $(n-1)$-simplices that are its faces. The {\em $i$th face} is the $n-1$ simplex spanned by the vertices $v_0,\ldots,v_{i-1},v_{i+1},\ldots,v_n$. We also consider the faces of simplices (and, respectively, the faces of those, and so on) as {\em oriented} by the order given by the higher-dimensional simplex from which they have arisen. The way, then, to obtain the correct orientations for enclosing a simplex (e.g., clockwise) is to take the faces in its boundary with alternating signs.

Let $X$ be a topological space. Consider all possible continuous maps $\sigma: \Delta^n\to X$ from the $n$-simplex to our topological space. Any such map is called a \emph{singular $n$-simplex}. The free abelian group with basis this set of all possible singular $n$-simplices is denoted by $C_n(X)$ and called the {\em $n$-th chain group}.
Its elements are called {\em $n$-chains}, i.e., any $n$-chain can be written as $\sum\limits_i n_i \sigma_i$ for some $n_i\in\Z$, where $\sigma_i: \Delta^n\to X$.
The \emph{boundary} of a singular simplex is denoted by $\partial_n(\sigma)$ and defined as 
\[
\partial_n(\sigma) = \sum_{i=0}^n (-1)^i \sigma|_{v_0,\ldots, v_{i-1},v_{i+1}, \ldots, v_n},
\]
where $\sigma|_{v_0,\ldots, v_{i-1},v_{i+1}, \ldots, v_n}$ indicates $\sigma$ restricted to the $i$th face of $\Delta^n$. That is, $\partial_n(\sigma)$ is the formal sum of $(n-1)$ simplices which are the restriction of $\sigma$ to the faces of $\Delta^n$, with an alternating sign to account for orientation. 

The boundary map $\partial_n: C_n(X)\to C_{n-1}(X)$
defined as the extension of boundaries to act on $n$-chains
is a group homomorphism
(since each element of the free abelian group $C_n(X)$ has a canonical representation, it is sufficient to define the map on the basis, as per the definition above, and then extend simply taking the linear combination of the operator results on the basis elements).
It can be shown that $\partial_n\partial_{n+1}=0$, i.e., $\img \partial_{n+1}\subseteq \ker \partial_n$.
The boundary operators $\partial_n$ together with the chain groups $C_n(X)$ form a {\em chain complex} of abelian groups (which simply means that in the representation below, for every two consecutive maps, it holds that $\img \partial_{n+1}\subseteq \ker \partial_n$), called the \emph{singular chain complex}, denoted $(C_\bullet(X),\partial_\bullet)$, or more simply $C_\bullet(X)$:
\[
\ldots \xrightarrow{\partial_{n+1}} C_n(X) 
\xrightarrow{\partial_n} C_{n-1}(X) 
\xrightarrow{\partial_{n-1}} \ldots
\xrightarrow{\partial_2} C_1(X)
\xrightarrow{\partial_1} C_0(X) \xrightarrow{\partial_0} 0.
\]

Intuitively, the $n$-chains that comprise the group $\ker\partial_n$ correspond to $n$-dimensional ``cycles'' in the $X$. Homology is a measurement of the ``holes'' of a space, obtained via ``cycles mod boundary.'' That is, any cycle is that is also a boundary is deemed trivial.  
Formally, the $n$-th \emph{singular homology group} is defined as $H_n(X) = \ker\partial_n / \img \partial_{n+1}$.  The homology of $X$, denoted $H_\bullet(X)$, is the collection of singular homology groups. The reduced homology of $X$, denoted $\tilde{H}_\bullet(X)$, is calculated via the (augmented) chain complex
\[
\ldots \xrightarrow{\partial_{n+1}} C_n(X) 
\xrightarrow{\partial_n} C_{n-1}(X) 
\xrightarrow{\partial_{n-1}} \ldots
\xrightarrow{\partial_2} C_1(X)
\xrightarrow{\partial_1} C_0(X)
\xrightarrow{\epsilon}  \Z \to 0,
\]
where $\epsilon(\sum n_i \sigma_i) = \sum_i n_i$.  Reduced homology often simplifies homology computations.  Indeed, if $X$ is a one point space, then $\tilde{H}_i(X) = 0$ for all $i$.

For a subspace $A\subseteq X$, the relative homology $H_n(X,A)$ is defined as the homology of the quotient chain groups, $H_n(X,A) = H_n(C_n(X)/C_n(A))$.  Intuitively, the relative homology measures the topology of $X$ relative to that of $A$ (i.e., if we deem all ``cycles'' within $A$ as trivial) and if $A$ fulfills the relatively mild condition that there is a neighborhood in $X$ of $A$ that deformation retracts to $A$, then $H(X,A)\cong \tilde{H}(X/A)$, the reduced homology of the quotient space $X/A$~\citep{hatcher}.

A standard tool in algebraic topology is the {\em exact sequence}, which is a sequence of abelian groups  and group homomorphisms:
\[
\cdots \to A_{n+1}\xrightarrow{\alpha_{n+1}}
A_n \xrightarrow{\alpha_n}
A_{n-1} \to \cdots
\]
where $\img \alpha_{n+1} = \ker\alpha_n$ for all $n$.
A fundamental theorem in algebraic topology is that, for a subspace $A\subseteq X$, there is an exact sequence, called the \emph{long exact sequence in homology}, which links $H_\bullet(X,A)$ together with $H_\bullet(X)$ and $H_\bullet(A)$ \citep{hatcher}:
\[
\cdots \xrightarrow{\partial_{n+1}} H_n(A)\xrightarrow{i_n} H_n(X)\xrightarrow{j_n} H_n(X,A)\xrightarrow{\partial_n} 
H_{n-1}(A)\xrightarrow{i_{n-1}} H_{n-1}(X)\xrightarrow{j_{n-1}} \cdots 
\xrightarrow{j_0} H_0(X,A)\to 0.
\]

\end{document}